\newtheorem{theorem}{Theorem}
\newtheorem{lemma}{Lemma}
\theoremstyle{definition}
\newcommand{\MUS}{Multi-UAV Network }
\newcommand{\Board}{BoaRD }
\theoremstyle{remark}
\begin{document}
	
\title{Optimizing Number, Placement, and Backhaul Connectivity of Multi-UAV Networks}

\author{
Javad Sabzehali, Vijay K. Shah, Qiang Fan, Biplav Choudhury, Lingjia Liu, and Jeffrey H. Reed
\thanks{This work is partially supported by Office of Naval Research under Grant N00014-19-1-2621 and Commonwealth Cyber Initiative (CCI), an investment in the advancement of cyber R\&D, innovation, and workforce development. For more information about CCI, visit {www.cyberinitiative.org}

Javad Sabzehali, Biplav Choudhury, Lingjia Liu, and Jeffrey H. Reed are with Wireless@VT, The Bradley Department of ECE at Virginia Tech, Blacksburg, VA 24061 USA (e-mails: \{jsabzehali, biplavc, ljliu, reedjh\}@vt.edu).

Vijay K. Shah is with Department of Cybersecurity Engineering, George Mason University, Fairfax, VA 22030 USA (e-mail: vshah22@gmu.edu).

Qiang Fan is with Qualcomm, San Jose, CA 95110 USA (e-mail: qiangfan29@gmail.com). 
}
}

\maketitle

\begin{abstract} 
Multi-Unmanned Aerial Vehicle (UAV) Networks is a promising solution to providing wireless coverage to ground users in challenging rural areas (such as Internet of Things (IoT) devices in farmlands), where the traditional cellular networks are sparse or unavailable. A key challenge in such networks is the 3D placement of all UAV base stations such that the formed Multi-UAV Network (i) utilizes a minimum number of UAVs while ensuring -- (ii) backhaul connectivity directly (or via other UAVs) to the nearby terrestrial base station, and (iii) wireless coverage to all ground users in the area of operation. This joint \textit{Backhaul-and-coverage-aware Drone Deployment} (BoaRD) problem is largely unaddressed in the literature, and, thus, is the focus of the paper. We first formulate the BoaRD problem as Integer Linear Programming (ILP). However, the problem is NP-hard, and therefore, we propose a low complexity algorithm with a provable performance guarantee to solve the problem efficiently. Our simulation study shows that the Proposed algorithm performs very close to that of the Optimal algorithm (solved using ILP solver) for smaller scenarios, where the area size and the number of users are relatively small. For larger scenarios, where the area size and the number of users are relatively large, the proposed algorithm greatly outperforms the baseline approaches -- backhaul-aware greedy and random algorithm, respectively by up to $17\%$ and $95\%$ in utilizing fewer UAVs while ensuring $100\%$ ground user coverage and backhaul connectivity for all deployed UAVs across all considered simulation setting.
\end{abstract}

\begin{IEEEkeywords}
Unmanned Aerial Vehicles, Multi-UAV Networks, Graph Theory, Approximate Algorithms, Backhaul
\end{IEEEkeywords}

\maketitle
\vspace{-0.1in}
\section{Introduction}
Unmanned Aerial Vehicles (UAVs), or drones, have found numerous applications in recent years in the industry, government, and commercial fields, such as telecommunications, rescue operations, safety aerial sensing, safety surveillance, package delivery, and precision agriculture~\cite{mozaffari2019tutorial}. In particular, UAVs have attracted significant attention as the key enablers of end-to-end wireless communications, owing to their small size, positioning flexibility, and agile autonomy. One of the critical applications of UAVs is to provide wireless connectivity in remote/rural areas to wireless devices, such as Internet of Things (IoT) devices, where the conventional cellular wireless coverage is sparse or largely unavailable~\cite{s19092157}. Multiple UAVs can collaborate to form a \textit{Multi-UAV Network} and provide end-to-end wireless communication services to all ground users (e.g., IoT devices, etc.) in the area of operation. However, there are several challenges in doing so. 
 
Deploying a \MUS in order to provide wireless coverage to all ground users in a certain area of operation incurs enormous investment for network providers. Thus, \textit{it is critically important to utilize minimum number of UAVs} and facilitate the engineering application of Multi-UAV Networks. As the ground user distribution in a typical rural or remote area exhibits spatial and temporal dynamics, it is favorable to place UAVs close to more users and increase the number of users covered by deployed UAVs. Moreover, in order to ensure end-to-end wireless connectivity between the terrestrial base station (BS) and any UAV (via other UAVs as relay hops), it is important that all the deployed UAVs in the formed Multi-UAV Network maintain a backhaul connectivity among UAVs (in other words, there exists at least one path from a UAV to any other UAV in the formed Multi-UAV Network) \cite{9422729,8422376,9364991,9484638}.

Several works on UAV-assisted wireless communications have investigated optimal 3D placement of UAVs with a variety of objective functions such as ground user coverage maximization~\cite{al2014optimal}, maximizing average rate under worst-case bit error rate threshold~\cite{8642333}, energy-efficient communication maximization~\cite{alzenad20173}, etc. with applications in coverage and capacity enhancement. Please refer to Related Works for detailed discussion of related works. However, there is no work that studies the joint optimization of number, 3D placement and backhaul connectivity of Multi-UAV Networks. 

In this paper, we first formulate the above  \textit{Backhaul-and-coverage-aware Drone Deployment} (BoaRD) optimization problem as an integer linear programming (ILP) problem, considering communication channel constraints for all the access links and backhaul links, and show that it is NP-Hard. Following this, we propose a low computational complexity algorithm, using graph theoretic concepts, for solving the BoaRD problem with a provable performance guarantee. 

The key contributions of the paper are as follows.

\begin{itemize}[leftmargin=*]

\item We investigate the problem of UAV 3D placement such that the formed \MUS (i) employs a minimum number of UAVs, (ii) ensures wireless coverage to all ground users in the area of operation, and finally, (iii) guarantees that all UAVs in the \MUS are directly or indirectly (via one or more UAV relay hops) connected to the nearby terrestrial BS, i.e., the formed \MUS maintains an end-to-end backhaul connectivity among UAVs.

\item We formulate the joint optimization problem of number, placement, and backhaul connectivity of Multi-UAV Networks, termed, \textit{Backhaul and coverage-aware Drone Deployment} (BoaRD) problem as an Integer Linear Programming (ILP) optimization problem. We show that the Geometric Set Cover problem is reducible to the BoaRD problem, and prove that it is NP-Hard. To solve the problem efficiently, we use graph theoretic concepts and propose a low computational complexity algorithm with a provable performance guarantee. 

\item We solve the BoaRD problem using ILP solver (namely, Gurobi optimizer) and compare our proposed algorithm against it for smaller scenarios, where the area size of considered operation area is small (up to $10 \times 10$ sq. Km)  and number of users is few (e.g., up to $60$ users). The results shows that the Proposed algorithm performs quite well and utilizes close to optimal number of UAVs for smaller scenarios.

\item For larger scenarios (with larger operation area sizes (e.g., $50 \times 50$ sq Km.) and hundred's of ground users),  our large-scale simulations demonstrate that the Proposed algorithm significantly outperforms the two baselines -- Backhaul-aware Greedy (described in Section \ref{secResults}) and Random algorithms in utilizing fewer number of UAVs, by up to $17\%$ and $95\%$, respectively, across varying number of ground users, area sizes and SNR thresholds for backhaul links. Moreover, the Proposed algorithm outperforms the basic greedy solution (with no backhaul) by up to $13\%$ in high user density scenarios at low backhaul SNR thresholds (i.e., minimum SNR threshold for a successful connection between two UAVs), which further corroborates the efficacy of the Proposed algorithm.
\end{itemize}

The remainder of this paper is organized as follows. Section \ref{secRelated} reviews the related works. Section \ref{secSystem} presents the system model. In Section \ref{secProblem}, we formulate the BoaRD problem as an ILP optimization problem and prove its NP-hardness. Section \ref{secPreliminaries} discusses the key intuitions behind the Proposed solution, which are at the basis of the Proposed algorithm discussed in Section \ref{secSolution}. In Section \ref{secPerformanceProof}, we analyze the performance guarantee of the Proposed algorithm. Section \ref{secResults} discusses the experimental results. Finally, Section \ref{secConclusion} presents the concluding remarks.

\vspace{-0.1in}

\section{Related Works}
\label{secRelated}

Recent years have witnessed a surge of works such as in \cite{ matolak2012air,s19092157,8642333, al2014modeling, matolak2016air, yanmaz2013achieving, mozaffari2017mobile, bor2016efficient, al2014optimal, mozaffari2016efficient, alzenad20173,kalantari2016number, wang2020placement, you20193d, cui2020adaptive, xiong2019task, hu2019uav, huang2020trajectory,9467352,pawar2019performance, yi2019modeling, nguyen2020joint, B_Morteza_Stochastic_2020, 9419071, mayor_deploying_2019,kalantari2017backhaul,9521913,ansari2019soarnet, lyu2017placement, 9462529, 9314048, 9541037, 9497328, 9378782, 9364991, 8482444, 9484638} (as well as the comprehensive overview \cite{mozaffari2019tutorial}~\cite{fotouhi2019survey}) that attempts to address several research challenges in UAV communications and networking. Existing literature has focused on the following research directions --  (i) air-to-ground channel modeling~\cite{matolak2012air, al2014modeling, matolak2016air, yanmaz2013achieving}, (ii) UAV trajectory optimization~\cite{you20193d, cui2020adaptive, xiong2019task, hu2019uav, huang2020trajectory,9467352}, (iii) performance analysis of UAV-enabled wireless networks~\cite{pawar2019performance, yi2019modeling, nguyen2020joint, B_Morteza_Stochastic_2020}, and (iv) optimal placement of UAVs as flying base stations with a variety of objective functions such as energy-efficient communication maximization, sum-rate maximization, ground user coverage maximization, and maximize average rate under worst-case bit error rate threshold~\cite{s19092157, al2014optimal, 8642333, alzenad20173, mozaffari2017mobile, bor2016efficient, mozaffari2016efficient, kalantari2016number, wang2020placement, kalantari2017backhaul,9419071, mayor_deploying_2019,9521913,ansari2019soarnet, lyu2017placement, 9462529, 9314048, 9541037, 9497328, 9378782, 9364991, 8482444, 9484638}, with applications in coverage and capacity enhancement of 4G/5G cellular networks, flying ad-hoc networks (FANETS), and flying base stations for post-disaster situations, mmWave Communications, among others.

Though there exists a rich and growing body of literature on UAV communications, including optimal placement of UAVs, \textit{there is scant work on the design, analysis, and optimization of backhaul connectivity of Multi-UAV Networks while optimizing the placement of UAVs for ground user coverage.} Bor-Yaliniz \textit{et al.} \cite{bor2016efficient} proposed a 3-D placement algorithm to cover the maximum number of ground users given the fixed number of UAVs. Zhang \textit{et al.} \cite{9314048} optimized the 3D position of UAVs aiming to jointly minimize the number of UAVs and the coverage rate. Lin \textit{et al.} \cite{9541037} proposed an adaptive UAV deployment scheme to optimize UAV locations for more user coverage, and less communication energy consumption. Zhang \textit{et al.} \cite{9497328} studied the joint 3D deployment of the UAVs and power allocation problem to maximize the throughput of the UAV base station system. However, the mentioned works only focused on the access link in the drone-assisted network without considering the backhaul link. Ansari \textit{et al.} \cite{ansari2019soarnet}  proposed a drone communications framework, in which free space optical links are employed to serve as the backhaul link between UAV and ground base stations. They used the free space link to transfer data and energy to the UAV simultaneously and thus provision high-speed backhaul as well as prolong the UAV's flight. Lyu \textit{et al.} \cite{lyu2017placement} aimed to minimize the number of UAV-mounted mobile base stations (MBS) needed to provide wireless coverage for a group of distributed ground terminals (GTs), with the assumption that MBSs are backhaul-connected via satellite links. Neeto \textit{et al.} \cite{9530185} proposed an approach to partition the available resource between access links and a backhaul link by optimizing the placement of a single UAV in the area of operation. Hu \textit{et al.} \cite{9509753} maximized the system uplink throughput by jointly optimizing the UAV altitude, power control, and bandwidth allocation between the backhaul and access links. Pham \textit{et al.} \cite{9378782} aimed to maximize the sum rate achieved by ground users by jointly optimizing the UAV placement, spectrum allocation, and power control. Their scenario consists of a single UAV connected to a macro base station by a backhaul link and serves several ground users via access links. Iradukunda \textit{et al.} \cite{9364991} investigated the problem of maximizing the worst achievable rate for ground users by optimizing the UAV placement and power allocation, and bandwidth allocation. They considered that several users are connected to a single UAV via access links, which is connected to a macro base station via a backhaul link, by incorporating non-orthogonal multiple access (NOMA) scheme. Santos \textit{et al.} \cite{9448698} provided an approach to optimally place the UAVs as gateways in the area to address the dynamic traffic demand of access points, which itself is based on dynamic attributes of users. Nguyen \textit{et al.} \cite{8482444} aimed to maximize the user sum rate in a UAV-assisted cellular network by jointly optimizing the location of UAVs, the transmit bemformer at UAVs and a macro cell base station, and the decoding order of the NOMA-successive interference cancellation on wireless backhaul transmissions. They considered a number of UAVs directly backhauling to the macro cell base station and forming a two-hop backhaul connectivity. Dai \textit{et al.} \cite{9484638} aimed to maximize the end-to-end throughput in a UAV-assisted cellular network consisting of a single user, a macro cell base station, and a fixed number of UAVs. They derived the optimal position of UAVs considering power control, and orthogonal frequency schemes.

As evidenced, none of the existing work have investigated the joint optimization problem of number, placement, and backhaul connectivity of Multi-UAV Networks, which is the key focus of our work.

\section{Multi-UAV Network Model}
\label{secSystem}

\begin{figure}[h!]
\begin{center}
\vspace{-0.1in}
  \includegraphics[scale=0.31]{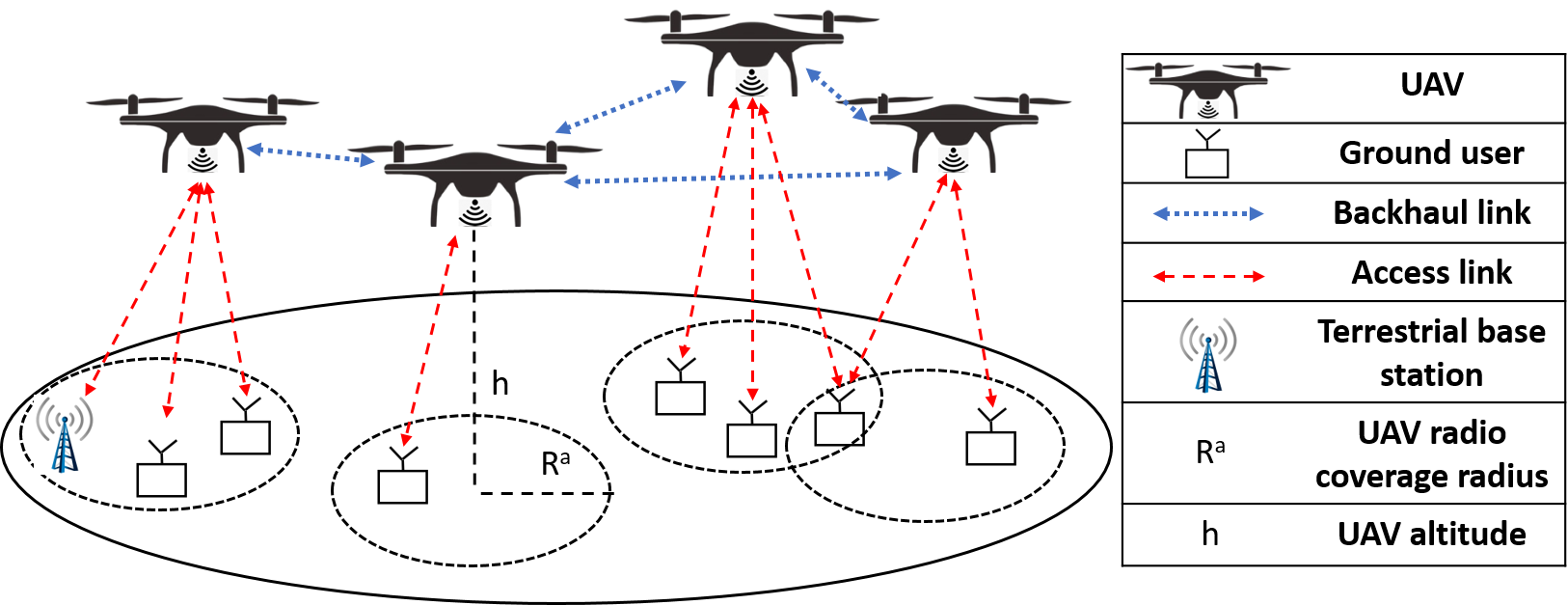}
  \caption{Envisioned Multi-UAV network model with UAVs, ground users, and terrestrial base station.}\label{fig:drone_coverage}
  \end{center}
  \vspace{-0.1in}
\end{figure}

As shown in Fig. \ref{fig:drone_coverage}, we consider a \MUS with $V$ ground users~\footnote{Since the goal of our work is to provide wireless coverage to ground users in rural areas, e.g. Internet of Things devices in farmlands, they are considered to be static or low mobility ground users. High mobility ground users is out of scope of this work and will be investigated as a part of future work.}, $J$ UAVs as aerial base stations, and one or more nearby terrestrial BS via which the \MUS can communicate with outside world. Denote $B$ as the set of terrestrial BSs in the area of operation. We assume that ground users and terrestrial BSs are located at the given locations on the horizontal plane (ground). For simplicity, let $V = V \cup B$ be the set of all ground users (including the terrestrial BSs)\footnote{As we focus on the deployment and coverage of UAV network, essentially, UAVs need to cover both the ground users and BSs (i.e., BS can be assumed as a  ``ground user'' in the UAV network.)} that are required to be covered by $J$ UAVs deployed in the region of interest.

Note that both the height and horizontal distance of a UAV towards ground users significantly impact the channel conditions of the access link. On the other hand, all ground users need to communicate with each other (or to the remote server) via the backhauls between UAVs, which requires that all UAVs keep connectivity with each other. Thus, UAVs should be deployed close enough to each other to ensure the channel condition threshold of the UAV-UAV backhaul links. As a result, both the (UAV-ground user) access link and (UAV-UAV) backhaul link are taken into account in the \MUS to provide seamless coverage for all ground users. 

In this work, we consider that the adjacent UAVs (in the backhaul links) in our Multi-UAV Network utilize orthogonal frequency channels to not interfere with one another. This is a typical frequency planning approach employed in the terrestrial cellular networks. It means that two UAVs can reuse the same frequency channel (and spectrum band) if those two UAVs not adjacent to each other. For access link, we can use the same frequency channel with various frequency/time division multiple access (and furthermore, channel resource scheduling protocols) so as to ensure no or little co-channel interference between two ground users.

Even after employing proper frequency planning and multiple frequency/time division access schemes, we understand that both these communication links are wireless and are susceptible to channel interference and other factor such as, large scale fading, small-scale fading and shadowing. However, since the focus of our work is on the deployment and coverage of Multi-UAV Network (rather than lower PHY or MAC layer design for a certain wireless link), we utilize simplified communication path loss model for wireless links (both access and backhaul links). One can  extend our work to account for channel interference and other specific channel related factors by employing sophisticated path loss models, such as, measurement-based path loss models.

\vspace{-0.15in}
\subsection{Average path loss model between a ground user and UAV}
The air-to-ground communications channel may be line-of-sight (LoS), and non-line-of-sight (NLoS), and thus both of them need to be taken into account. The communication channel is assumed to be a probabilistic LoS channel. Given an access link between ground user $i$ located at $(x'_i,y'_i,0)$, and UAV $j$ located at $(x_j,y_j,h_j)$, the path loss of the LoS and NLoS channels are modeled as follows \cite{al2014optimal}:

\begin{equation}
\begin{cases}
{\varphi_{ij}^L = 20\log \left(\frac{{4\pi {f_c}{d_{ij}}}}{c}\right) + {\eta ^L},}\\
{\varphi_{ij}^N = 20\log \left(\frac{{4\pi {f_c}{d_{ij}}}}{c}\right) + {\eta ^N}.}
\end{cases}
\end{equation}

where $\eta^L$ and $\eta^N$ are denoted as the system loss for the LoS and NLoS of an access link, $f_c$ is the carrier frequency, $d_{ij}=[(x_j-x'_i)^2+(y_j-y'_i)^2+h_j^2]^{\frac{1}{2}}$ is this 3D distance between user $i$ and UAV $j$, and $c$ is the speed of light. 
As the access link is a probabilistic LoS channel, the probabilities of both the LoS and NLoS channels can be expressed as \cite{al2014optimal}:

\begin{equation}
\left\{ {\begin{array}{*{20}{c}}
{{p^L} = {{[1 + a{e^{ - b({\theta _{ij}} - a)}}]}^{ - 1}},}\\
{{p^N} = 1 - {p^L},}
\end{array}} \right.
\end{equation}
where $a$ and $b$ are constant parameters based on the environment (e.g., urban, rural, etc.), which can be measured proactively. $\theta_{ij}$ is the elevation angle between ground user $i$ and UAV $j$ that can be modeled as $\theta_{ij}=arctan(h_j/\delta_{ij})$ where $h_j$ is the height of the UAV and $\delta_{ij}=[(x_j-x_i)^2+(y_j-y_i)^2]^{\frac{1}{2}}$ is the horizontal distance between user $i$ and UAV $j$.

Hence, the average path loss of the access link can be derived as follows \cite{al2014optimal}:
\begin{equation}
    \varphi_{ij}=p^{L}\varphi^{L}_{ij} + p^{N}\varphi^{N}_{ij}.
\end{equation}

\vspace{-0.2in}
\subsection{Average path loss model between two UAVs} Since UAVs fly over ground users, we assume that their altitudes are high enough to maintain LoS channel between each other. Thus, the path loss between UAV $j$ and UAV $k$ can be expressed as
\begin{equation} \label{pathloss_value_cal}
    {\varphi _{jk}^L = 20\log \left(\frac{{4\pi {f_c}{d_{jk}}}}{c}\right)}.
\end{equation}

Without loss of generality and for the ease of presentation, we consider same carrier frequency ($f_c$) for backhaul link (as that of access link).

\begin{table}
	\caption{List of Symbols}
	\label{tab:symbols}
	\centering
	\begin{tabular}{|p{0.6in}|p{2.5in}|}
	\hline
		\textbf{Symbol} & \textbf{Definition}  \\
		\hline
    	$B$ & Set of terrestrial BSs. \\ \hline
    	$V$ & Set of a ground users (including terrestrial BSs). \\ \hline
		$J$ & Set of UAVs in \MUS. \\ \hline
		$(x_j,y_j,h_j)$		& 	3D location of UAV $j$.\\ \hline
		$(x'_i,y'_i,0)$		& 	3D location of ground user $i$.\\ \hline
		$\gamma_{ij}$   &  SNR between UAV $j$ and user $i$\\ \hline
		$\gamma'_{jk}$	& SNR between UAV $j$ and UAV $k$. \\	\hline
		$\gamma_0$ & Minimum SNR threshold for successful connection between a ground user $i$ and a UAV. \\ \hline
		$\gamma_0^{'}$ & Minimum SNR threshold for a successful communication link between two UAVs. \\ \hline
		$R$ & Radius of maximum radio coverage on ground  \\ \hline
		$R^{'}$ & Radius of maximum UAV coverage, (i.e., in air) \\ \hline
	\end{tabular}
	\vspace{-0.2in}
\end{table}

\vspace{-0.15in}
\subsection{Communications Model} Given the path loss model, $P_j$ as the transmission power of UAV $j$, and $\delta^2$ as the noise power, we can derive the signal-to-noise ratio (SNR) between UAV $j$ and ground user $i$ as follows:

\begin{equation} \label{SNR_value_cal}
    \gamma_{ij}= 10\log{P_{j}} - \varphi _{ij} - 10\log{\delta^2}.
\end{equation}

Note that the SNR of a ground user $i$ determines if it is covered by the corresponding UAV. In other words, a ground user is assumed to be within the coverage of UAV $j$ when its $\gamma_{ij}$ meets the SNR threshold $\gamma_0$ (i.e., $\gamma_{ij} \geq \gamma_0$). Given the 3D location and transmission power of UAV $j$, we can determine the transmission range of a certain UAV. Similarly, we assume that UAV $j$ and UAV $k$ have a successful connection provided that $\gamma_{jk}$ exceeds the corresponding threshold $\gamma^{'}_{0}$.  Major notations used in this paper are listed in Table \ref{tab:symbols}.

\section{Problem Formulation} 
\label{secProblem}
In this section, we formulate the \textit{Backhaul and coverage aware  Drone deployment} (BoaRD) problem as an Integer Linear Programming (ILP) optimization problem. \Board aims to minimize the number of UAVs needed to provide wireless coverage to all ground users in the area of operation, such that (i) each ground user is efficiently covered by at least one UAV, and (ii) the backhaul connectivity is guaranteed.
 
\textbf{Notations.} Let \textit{V} and \textit{J} denote the set of ground users and UAVs, respectively. $\gamma_{ij}$ denotes the SNR between ground user $i \in V$ and UAV $j \in J$, whereas $\gamma^{'}_{jk}$ denotes the SNR between two UAVs $j$ and $k$. Let $\gamma_{0}$ denotes the minimum SNR threshold for successful connection between a ground user and a UAV. Similarly, let $\gamma^{'}_{0}$ denotes the minimum SNR threshold for successful communication between any two UAVs. For backhaul connectivity criteria, a UAV must have a link with at least one other UAV. Finally, let $(x_j, y_j, h_j)$ denotes the 3D location of UAV $j$ (in case of deployment in the 3D space) in the considered area of operation.

\textbf{Variables.} We introduce a decision variable $f_j$ = 1, if a UAV $j$ is located at a 3D location $(x_j, y_j, h_j)$ in the area of operation, otherwise $f_j = 0$. We use another variable $\beta_{ij}$ = 1, if a ground user $i \in V$ is associated with a UAV $j \in J$, otherwise $\beta_{ij} = 0$. We consider a third variable $z_{jk} = 1$, if a UAV $j$ is connected to UAV $k$, otherwise $z_{jk} = 0$ (where $j, k \in J$). Fourth, we introduce a dummy decision variable $u_{jk} = 1$ to indicate whether a certain UAV-UAV backhaul link is selected in the Multi-UAV Network topology.

\begin{align}
\vspace{-0.2in}
& \hspace{5mm} \mathop {\min}\limits_{( \beta _{ij}, f_j, z_{jk})} \sum\limits_j f_j \label{objFunction}  \\
    &\sum\limits_{j \in J} \beta _{ij}  \geq 1, \forall i \in V, \label{robustnessConst1}\\
    &\gamma _{ij} \beta _{ij} \geq   \gamma _0 \beta _{ij},  \forall i \in V, \forall j \in J, \label{userLinkConst} \\
      &\beta_{ij} \leq f_j, \label{userLinkExistenceConst}\\
    & \gamma _{jk} z_{jk} \geq  \gamma _0^{'}z_{jk},  \forall j \in J, \forall k \in J, k \neq j, \label{droneLinkConst}\\
    & z_{jk} \leq f_j, z_{jk} \le f_k, \label{droneLinkExistenceConst}\\
    & \sum\limits_{k \in J} z_{jk}  \geq 1, \forall j \in J, \label{robustnessConst2}\\
    & u_{jk} \leq z_{jk}, \label{connectedNetworkConst1}\\
    & \sum_{\substack{j, k \in J, \\ j \neq k}} u_{jk} \geq \sum_{p \in J} f_p - 1, \label{connectedNetworkConst2} \\
    & \sum_{\substack{j, k \in S, \\ j \neq k}} u_{jk} \leq |S| -1, \forall S \subseteq J, |S| \geq 1, \label{connectedNetworkConst3} \\
    & f_j, \beta_{ij}, z_{jk}, u_{jk} \in \{0, 1\}. \label{binaryVariablesP1}
\end{align}

\textbf{Objective Function.} As shown in Eq. \ref{objFunction}, the objective function is to minimize the number of UAVs (and determine the optimal 3D placement of each UAV) in Multi-UAV Network, so that each ground user is served by at least one UAV and the backhaul connectivity among UAVs is guaranteed. Note that this does not preclude the possibility that some ground users may be covered by more than one UAVs, and each UAV may have links with more than one UAVs in Multi-UAV Network. 

\textbf{Constraints.} Eq. \ref{robustnessConst1} imposes each ground user to be covered by at least one UAV in the area. Eq. \ref{userLinkConst} indicates that a ground user can only be covered by a certain UAV if the SNR threshold ($\gamma_0$) is met. Eq. \ref{userLinkExistenceConst} indicates that ground user $i$ may be associated with UAV $j$ only if UAV $j$ is deployed in the area. Eq. \ref{droneLinkConst} ensures that a UAV-UAV backhaul link exists only if it meets the SNR threshold $\gamma^{'}_0$ between them. Similarly, Eq. \ref{droneLinkExistenceConst} indicates that a connection between two UAVs $j$ and $k$ exists only if both UAVs are deployed in the area of operation. Eq. \ref{robustnessConst2} ensures that a certain UAV has a backhaul link with at least one other UAV. Note this constraint by itself does not guarantee that the backhaul connectivity constraint is met. 

In order to address this, we introduce three more constraints, i.e., Constraints \ref{connectedNetworkConst1}, \ref{connectedNetworkConst2}, and \ref{connectedNetworkConst3}. Eq. \ref{connectedNetworkConst1} restricts a certain UAV-to-UAV link be a part of connected network topology only if the communication link exists between those two UAVs. Constraint \ref{connectedNetworkConst2} ensures that there exists at least ($n - 1$) connections among $n$ UAVs, where $n = \sum_{j \in J} f_j$ and $n \leq |J|$. Moreover, constraint in Eq. \ref{connectedNetworkConst3} ensures that there is no cycle in any subset $S \subseteq J$. These two are the necessary and sufficient conditions to ensure a connected backhaul connectivity in \MUS with $n$ UAVs. Eq. \ref{binaryVariablesP1} represents the binary decision variables, that take values either $0$ or $1$.

\begin{theorem}
The \Board problem is NP-Hard.
\end{theorem}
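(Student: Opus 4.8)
The plan is to establish NP-hardness by a polynomial-time reduction from the classical \textbf{Set Cover} problem, whose decision version asks, given a universe $U$, a family $\mathcal{S}=\{S_1,\dots,S_m\}$ of subsets of $U$, and an integer $k$, whether $U$ can be covered by at most $k$ members of $\mathcal{S}$. The motivation is that the coverage requirement of the \Board problem, which uses the fewest UAVs so that Eq.~\ref{robustnessConst1} holds for every ground user, is exactly a set-cover objective once each candidate UAV is identified with the set of users it can serve under the SNR condition in Eq.~\ref{userLinkConst}. The key idea is therefore to engineer the backhaul side of the instance so that the connectivity constraints (Eqs.~\ref{robustnessConst2}--\ref{connectedNetworkConst3}) are satisfied for free and impose no extra restriction, thereby collapsing \Board onto pure coverage.

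I would carry out the reduction in three steps. First, from an instance $(U,\mathcal{S},k)$ I build a \Board instance by letting the ground users $V$ be the elements of $U$ and the candidate UAVs $J$ be the members of $\mathcal{S}$, and by fixing the SNR values (equivalently, the positions) so that $\gamma_{ij}\ge\gamma_0$ holds precisely when $i\in S_j$; this makes the \Board coverage relation coincide with set membership. Second, to neutralize the backhaul I choose the UAV--UAV threshold $\gamma_0'$ small enough, equivalently $R'$ large relative to the diameter of the candidate region, that $\gamma_{jk}'\ge\gamma_0'$ for \emph{every} pair $j,k\in J$; the backhaul graph is then complete, so any selected set of at least two UAVs meets Eq.~\ref{robustnessConst2} and admits a spanning tree satisfying Eqs.~\ref{connectedNetworkConst2}--\ref{connectedNetworkConst3}. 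Third, I verify the equivalence: the feasible \Board deployments are now exactly the subcovers of $(U,\mathcal{S})$, the objective $\sum_j f_j$ of Eq.~\ref{objFunction} equals the number of chosen sets, and hence a deployment using at most $k$ UAVs exists if and only if $(U,\mathcal{S},k)$ admits a subcover of size at most $k$. As the construction is polynomial and Set Cover is NP-hard, NP-hardness of \Board follows.

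The step I expect to be the main obstacle is the \emph{geometric realizability} of the coverage relation: since a UAV covers a ground disk of radius $R$, an arbitrary element--set incidence need not be expressible by placing users and UAVs in the plane. I would dispatch this in either of two ways---by reducing instead from the \emph{geometric} (discrete disk) version of Set Cover, which is itself NP-hard and already carries the disk structure, so that placing each user at its point and each candidate UAV above its disk center realizes the instance verbatim, or, if the $\gamma_{ij}$ are treated as part of the ILP input rather than as quantities derived from positions, by setting them directly to encode any $0/1$ coverage pattern. A last detail is the degenerate single-UAV case, which Eq.~\ref{robustnessConst2} renders infeasible; this is harmless because Set Cover stays NP-hard on instances whose optimum exceeds one, so I assume without loss of generality that every candidate cover uses at least two sets and the constructed (complete) backhaul on $\ge 2$ nodes is connected.
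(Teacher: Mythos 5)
Your proposal is correct and follows essentially the same route as the paper: both reduce from the geometric (disk) version of Set Cover and neutralize the backhaul constraints by making the UAV--UAV SNR threshold trivially satisfiable so that connectivity is free, leaving a pure disk-cover problem. You are somewhat more careful than the paper about two details---the geometric realizability of the incidence structure and the degenerate single-UAV case under Eq.~\ref{robustnessConst2}---but these are refinements of the same argument rather than a different approach.
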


\begin{proof}
We show that Geometric Set Cover (GSC) problem, which is NP-Hard, is polynomial-time reducible to the \Board problem, which further shows that the \Board problem is NP-Hard. Let us consider a generic instance of GSC problem: Given a set $X$ of points in $\mathbb{R}^2$, and $\mathcal{R}$ is a family of subsets of $X$, which is called ranges. The goal is to select a subset $\mathcal{C}\subseteq \mathcal{R}$ whose size is minimum and all points in the $X$ are covered by at least one range in $\mathcal{C}$.

The proof is quite straightforward. Suppose that the SNR threshold for UAV-to-UAV link is $-\infty$ (i.e., $\gamma^{'}_{0}  = -\infty$), which relaxes Constraint \ref{droneLinkConst}. Furthermore, it also means all UAVs are in communication range, and thus, relaxes all connectivity constraints, i.e., Constraints \ref{connectedNetworkConst1}, \ref{connectedNetworkConst2}, and \ref{connectedNetworkConst3}. Given SNR threshold for UAV-to-ground user link $\gamma_0$, we can compute the radius of maximum ground coverage (denoted by $R$) \cite{al2014optimal}. Then, \Board Problem is transformed into choosing the minimum number of radio coverage disks (with radius $R$) that provides coverage to all ground users. Considering the set of users $J=X$, and the set of users on the ground that can be covered by deploying UAVs $R=\mathcal{R}$, then the optimal solution to the transformed problem $C$ is equivalent to the optimal solution to the GSC problem $\mathcal{C}$.
We reduced the GSC problem to an instance of \Board problem. Since the GSC problem is a classic NP-hard problem~\cite{BRIMKOV20121039}, the \Board problem is also NP-Hard.

\vspace{-0.15in}
\end{proof}

\section{Key Intuitions Behind the Proposed Solution}
\label{secPreliminaries}

In this section, we discuss key intuitions behind the proposed solution (discussed in Section VI) that solves \Board problem in polynomial-time with performance guarantees.

\textit{(I) Focus on the optimal altitude of the UAVs that provides maximum radio coverage on the ground.} Since one of the key objectives of \Board problem is to provide coverage to all ground users in the area, it becomes intuitive that UAVs are deployed at the optimal altitude that provides maximum radio coverage on the ground while meeting the SNR threshold, thanks to their capability of maintaining that height. From the seminal work \cite{al2014optimal}, such an optimal altitude can be computed mathematically given the environmental parameters, carrier frequency, LoS and NLoS system loss, and transmit power. Let $h$ and $R$ respectively be the optimal altitude of UAV and the radius of the maximum radio coverage by a UAV on the ground (corresponding to SNR threshold $\gamma_0$ for access link).

\textit{(II) Focus on possible covered sets of ground users for UAVs rather than candidate 3D locations of UAVs.} Since UAVs can be deployed in any 3D location in the area of operation $\mathbb{R}^3$, the number of candidate locations of UAVs is infinite, or the solution space of \Board problem is infinite. However, several of those candidate locations are essentially equivalent if they provide wireless coverage to the same set of ground users. Thus, the Proposed solution only needs to consider one representative 3D location among its associated class of all equivalent 3D locations, and the number of all such representative 3D locations is finite because the number of all possible covered set of ground users is finite. 

\textit{(III) Focus on those candidate 3D locations that provide coverage to a large set of ground users.} If a representative 3D location covers the set of ground users $\{u_1, u_2, u_3, u_4 \}$, then the Proposed solution does not need to consider 3D locations that cover its subsets, such as, $\{u_1, u_2\}$ or $\{u_2, u_3, u_4\}$~\footnote{Note that this intuition holds for our \Board problem setting as it only ensures ground user coverage (with no constraint on bit rate thresholds). However, it will not hold for \MUS system where user bit rate thresholds have to be met. We will investigate this in our future work.}.

\begin{figure}
\begin{center}
  \includegraphics[scale=0.75]{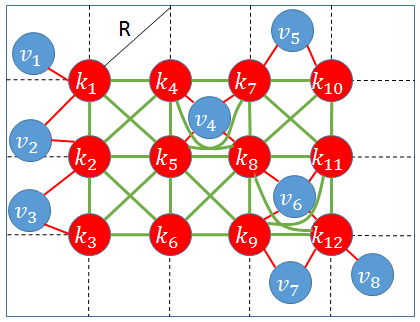}
  \caption{Constructed graph $G$, with $V$ ground users and $\mathcal{K}$ candidate locations in the area of operation. Blue and Red dots respectively represent ground users and candidate 3D locations. An edge (in red) represents the communication link between a ground user and a candidate location. Whereas, an edge (in green) denotes the link between two UAVs. }\label{fig:graph_constr}
  \end{center}
  \vspace{-0.25in}
\end{figure}

\vspace{-0.05in}

\textit{(IV) Account for UAV-UAV connectivity guarantee.} The proposed solution to the \Board problem not only minimizes the number of UAVs needed to cover all the ground users, but also guarantees that the deployed UAVs ensure backhaul connectivity. Now the average path loss in the UAV-UAV backhaul link is usually much smaller compared to that of UAV-to-ground user access link, thanks to LOS path loss model in case of backhaul links. Thus, the maximum radius $R^{'}$ of UAV coverage is much larger than the maximum radius $R$ of the radio coverage on the ground, i.e., $R^{'} > R$.

\textbf{3D candidate locations of UAVs.}
Intuitively, the number of candidate locations of UAVs in the area of operation is infinite. Utilizing intuitions I and II, we present a simple grid approach to reduce the infinite solution space of \Board problem to finite solution space -- We first determine the optimal altitude of the UAVs that provides the maximum coverage on the ground (See Intuition I). Next, as depicted in Figure \ref{fig:graph_constr}, we divide the area of operation into $N$ grids, each with diagonal length = $R$, and length = breadth = $\frac{R}{\sqrt{2}}$, where $R$ is the radius of the maximum radio coverage on the ground. All these grid intersection points are the candidate locations of UAVs. Notice this approach allows $8$ grid intersections ($\{k_i\mid i\in[1,9],i\neq5\}$) in addition to the central grid intersection ($k_5$). Given that there are $9$ candidate locations within a grid of diagonal length $R$, it is very likely that the resultant finite solution space (with $N$ grids) will have the (near) optimal solution to the problem. Compared to state-of-the-art grid based approach~\cite{kalantari2017backhaul, mayor_deploying_2019} that simply discretizes the area of operation, our approach here is to create a minimum number of grids that discretizes the area without loosing out on the potential of finding an optimal solution. However, our proposed approach is independent of this approach, and will work perfectly with any state-of-the-art grid discretion approaches.

\section{Proposed Algorithm}
\label{secSolution}

In this section, we model \MUS as a graph (See Figure \ref{fig:graph_constr}), and then transform the \Board problem as a graph problem. Next, we propose a low-complexity algorithm with performance guarantee to effectively solve the transformed \Board problem. 

\textbf{Graph modeling.} A UAV will provide wireless coverage to a certain ground user if it lies within the UAV's maximum radio coverage ($R$). On the other hand, a UAV will have a backhaul link with another UAV in the area of operation, if they are within the transmission range of each other (i.e., $R^{'})$. As depicted in Figure \ref{fig:graph_constr}, let graph $G(V \cup \mathcal{K}, E \cup E^{'})$ denotes the \MUS with $V$ as the set of ground users and $\mathcal{K}$ as the set of 3D candidate locations. For edge set $E$, we include an edge $e_{ik}$ between user $i \in V$ and candidate location $k \in \mathcal{K}$, if user $i$ is covered by the UAV at location $k$, i.e., $\gamma_{ik} \geq \gamma_0$. In other words, the euclidean distance between user $i$ and candidate location $k$ is less than or equal to $R$, i.e., $dist_{ik} \leq R$. Similarly, the edge set $E^{'}$ reflects the backhaul links, where a link $e^{'}_{jk}$ between UAV $j$ and $k$ exists if they are within each other's transmission range. i.e., $\gamma_{jk} \geq \gamma^{'}_0$ (or euclidean distance between UAV $j$ and $k$, $dist_{jk} \leq R^{'}$). 

\textit{Given the graph $G$, the \Board problem is transformed into choosing the minimum subset $D$ of $\mathcal{K}$ such that every node in $V$ is adjacent to at least one member of $D$ and the members in $D$ form a connected network topology.}

Let $D \subseteq \mathcal{K}$ be the minimum connected subset of $G$, i.e., the solution of the \Board problem. It means that all nodes in $D$ can cover every node in $V$, and keep connected (i.e., the nodes of $D$ can reach each other via a path that stays entirely within $D$). In other words, deploying $|D|$ UAVs at the selected locations can provision coverage to all ground users, whereas all the UAVs form a connected network, i.e., backhaul connectivity is guaranteed. Note that though the \Board problem seems similar with a well-known graph problem, i.e., \textit{Minimum Connected Dominating Set} (MCDS) problem{\footnote{A dominating set of a graph $G$, denoted as $D$, is a set of nodes where for every node $u\in G$, either $u\in D$ or $u$ is adjacent to a node $v\in D$. Then, A connected dominating set of a graph $G$, denoted as $D$, is a dominating set of graph $G$ where for every node in $D$ there exists a path to any other node in $D$ that stays entirely within $D$. And finally, A minimum connected dominating set (MCDS) of a graph G is a connected dominating set with the smallest possible cardinality among all connected dominating sets of G.}}~\cite{Butenko2004}, it is different in two aspects: (i) $D$ is a subset of $\mathcal{K}$ (instead of being a subset of $V$ itself); and (ii) each node in $V$ is adjacent to at least one node in $D$, where $D \subseteq \mathcal{K}$. Note, we do not consider any connectivity constraint on remaining $\mathcal{K} \setminus D$. Next we detail the Proposed algorithm that efficiently solves the transformed \Board problem with low computational complexity and provable performance guarantee. 

\begin{algorithm}  
\label{alg_2}
\small
    \textbf{Input:} Locations of ground users, Radius of maximum radio coverage on ground $R$, and Radius of maximum UAV coverage $R^{'}$
	\textbf{Output:} Graph $G$ and List of candidate locations with its associated ground users, $U$
	\begin{algorithmic} [1]
    \State Initialize a graph $G = \phi$, and a list $U = \phi$
    \For{candidate location, $j \in \mathcal{K}$}
        \For{candidate location, $k  \in \mathcal{K}$}
        \If{$dist_{jk} \leq R^{'}$}
            \State $G$.add\_edge($j, k$)
        \EndIf
        \EndFor
    \EndFor
     \State $U_k = \phi$ //Set of ground users in coverage range of candidate location $k \in \mathcal{K}$, i.e., $dist_{ik} \leq R$ where $i \in V$
    \For{ground user, $i \in V$}
        \For{candidate location, $k  \in \mathcal{K}$}
        \If{$dist_{ik} \leq R$}
             \State $G$.add\_edge$(i, k)$
            \State $U_k = U_k \cup i$
        \EndIf
        \EndFor
    \EndFor
    \State $U = \cup_{k \in \mathcal{K}} V_k$
    \State return $G, U$
    \end{algorithmic}  
	\caption{Initialization}
	\label{P1_algo}  
\end{algorithm}	

\begin{algorithm}  
\small
	\textbf{Input:} $V$ ground users, $\mathcal{K}$ candidate locations, and Maximum ground and UAV coverage radius, $R$ and $R^{'}$ respectively.\\
	\textbf{Output:} $D$ chosen candidate locations, where $D \subseteq \mathcal{K}$
	\begin{algorithmic} [1]
	\State G, U = INITIALIZATION (V, $\mathcal{K}$, $R$, $R^{'}$)
    \State $D = \mathcal{K}$ 
    \State $F = \phi$ //Set of fixed nodes
    \While{$D \setminus F \neq \phi$}
    \State $U_{min} = \{k\mid U_k\in U,|U_k|$ is minimum$\}$ 
    \State $u = argmin\{\delta(k)\mid k \in U_{min} \}$
    \If{$G[D \setminus \{u\}]$ is not connected or $V \setminus \cup_{k \in D \setminus \{u\}} U_k \neq \phi$}
        \State $F = F \cup \{u\}$
        \For{candidate location, $k  \in D \setminus \{F\}$}
            \For{ground user, $i \in U_u$}
                \State $U_k = U_k \setminus \{i\}$
            \EndFor
        \EndFor
    \Else
        \State $D = D \setminus \{u\}$
    \EndIf
    \State $U = U \setminus \{U_u\}$
    \EndWhile
    \State return $D$
	\end{algorithmic}  
	\caption{Proposed Algorithm}
	\label{alg:BoaRD}  
\end{algorithm}

\textbf{Algorithm Description.} The pseudocode of the Proposed algorithm is presented in Algorithm 2. As shown in Line 1, the algorithm first calls a function INITIALIZATION ($V$, $\mathcal{K}$, $R$, $R^{'}$) that returns -- (i) the aforestated graph $G$, and (ii) a list $U = \cup_{k \in \mathcal{K}} U_k$ where each $U_k$ contains the set of the ground users associated with the UAV at candidate location $k \in \mathcal{K}$. 

The details of INITIALIZATION ($V$, $\mathcal{K}$, $R$, $R^{'}$) is presented in Algorithm \ref{P1_algo}. At the beginning, the algorithm initializes an empty graph $G$ and list $U$ in Line 1.  Then, as shown lines 2 - 5, the graph $G$ incorporates edges between each pair of candidate location $j, k \in \mathcal{K}$, whose euclidean distance is $\leq R^{'}$ (in other words, the UAVs at the candidate locations $j$ and $k$ are within the maximum UAV coverage radius $R^{'}$ of each other). Similarly, the algorithm also adds edges between ground user $i \in V$ and candidate location $k \in \mathcal{K}$ if the euclidean distance between them is $\leq R$ (See lines 7-10). In addition, as shown in Line 11, we add ground user $i$ into $U_k$ (w.r.t to candidate location $k$) if user $i$ is within the coverage range of UAV $k$.

After the initialization step, we take the set of candidate locations $\mathcal{K}$ as the initial $D$ (Line 2 of Algorithm 2). (Note $D$ is the minimum connected subset of $G$, i.e., the solution to the \Board problem). At each iteration, the algorithm first selects the list of candidate locations $U_{min}$ which have the minimum cardinality of covered ground users (See line 5). Sequentially in Line 5, it selects a candidate location $u$, which has the minimum degree in set $U_{min}$. As shown in Lines 7 - 14, the algorithm removes the candidate location $u$ from graph $G$ if both of the following conditions are met -- (1) Removing $u$ does not make the graph $G$ disconnected, and (2) Remaining candidate locations ($K \setminus \{u\}$) can cover all the nodes in ground user set $V$. If the above conditions are not satisfied, candidate location $u$ is dispensible in the final solution $D$, and thus, $u$ must be \textit{fixed} (See Line 8). Following this in Lines 9 - 11, all the ground users that are associated with $u$ are removed from the remaining $U_k$. Otherwise as shown in line 13, we remove the candidate location $u$ from $D$. Such a candidate location is referred to as \textit{non-fixed candidate location} as its removal neither disconnects the subgraph in $D$ nor hampers the coverage of ground users in $V$. Afterwards, we remove the corresponding set $U_u$ from the list $U$. These steps are repeated until there is no non-fixed candidate location in $D$. 

\begin{theorem}
The time complexity of the Proposed algorithm is $O(|\mathcal{K}|^2.(|V|+|\mathcal{K}|)$.
\end{theorem}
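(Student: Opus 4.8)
The plan is to account separately for the cost of the INITIALIZATION routine (Algorithm 1) and the cost of the main \texttt{while} loop of the Proposed algorithm (Algorithm 2), and then show that the loop dominates and matches the claimed bound. First I would bound INITIALIZATION: the double loop over candidate locations in Lines 2--5 performs a constant-time distance comparison for each ordered pair $(j,k) \in \mathcal{K} \times \mathcal{K}$, costing $O(|\mathcal{K}|^2)$; the double loop over ground users and candidate locations in Lines 7--11 does constant work for each pair $(i,k) \in V \times \mathcal{K}$, costing $O(|V|\,|\mathcal{K}|)$. Hence INITIALIZATION runs in $O(|\mathcal{K}|^2 + |V|\,|\mathcal{K}|) = O(|\mathcal{K}|(|\mathcal{K}| + |V|))$, which is already subsumed by the target bound.

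Next I would bound the number of iterations of the \texttt{while} loop in Algorithm 2. The key observation is that every iteration selects exactly one candidate location $u$ and then, in either branch of the conditional, removes $U_u$ from the working list $U$ (Line 15) and either fixes $u$ by adding it to $F$ or deletes it from $D$. In both cases $u$ is never reconsidered, so the cardinality of the unresolved set $D \setminus F$ strictly decreases by one per iteration. Since $D$ is initialized to $\mathcal{K}$, the loop executes at most $|\mathcal{K}|$ times.

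Then I would bound the per-iteration cost, which is the crux. Selecting the minimum-coverage set $U_{min}$ (Line 5) and the minimum-degree vertex $u$ within it (Line 6) each require a single scan over the at most $|\mathcal{K}|$ surviving candidate locations, i.e., $O(|\mathcal{K}|)$. The connectivity test on $G[D \setminus \{u\}]$ in Line 7 is run on the subgraph induced by the candidate locations in $D$, which has at most $|\mathcal{K}|$ vertices and at most $O(|\mathcal{K}|^2)$ edges; a breadth-first or depth-first traversal therefore costs $O(|\mathcal{K}|^2)$. The coverage test $V \setminus \bigcup_{k \in D \setminus \{u\}} U_k$ forms a union of at most $|\mathcal{K}|$ sets, each of size at most $|V|$, costing $O(|\mathcal{K}|\,|V|)$. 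Finally, the fixing update in Lines 9--11 iterates over at most $|\mathcal{K}|$ candidate locations and, for each, over the at most $|V|$ users in $U_u$, again $O(|\mathcal{K}|\,|V|)$. Summing, one iteration costs $O(|\mathcal{K}|^2 + |\mathcal{K}|\,|V|) = O(|\mathcal{K}|(|\mathcal{K}| + |V|))$.

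Finally I would combine the pieces: at most $|\mathcal{K}|$ iterations, each costing $O(|\mathcal{K}|(|\mathcal{K}| + |V|))$, gives $O(|\mathcal{K}|^2(|\mathcal{K}| + |V|))$ for the loop, and adding the cheaper INITIALIZATION cost leaves the bound unchanged, yielding the claimed $O(|\mathcal{K}|^2(|V| + |\mathcal{K}|))$. The main obstacle I anticipate is the connectivity check: one must justify that it operates only on the induced subgraph over $D \subseteq \mathcal{K}$ rather than on the full vertex set $V \cup \mathcal{K}$, so that it costs $O(|\mathcal{K}|^2)$ instead of $O((|V|+|\mathcal{K}|)^2)$; if the traversal also touched the ground-user vertices and all access edges, the per-iteration cost would inflate and the stated bound would not follow directly. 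A secondary subtlety is confirming that the loop-iteration count is genuinely $O(|\mathcal{K}|)$ and that no candidate is processed more than once, which hinges on the removal of $U_u$ from $U$ at the end of every iteration.
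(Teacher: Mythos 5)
Your proposal is correct and follows essentially the same route as the paper: bound INITIALIZATION at $O(|\mathcal{K}|(|V|+|\mathcal{K}|))$, argue the \texttt{while} loop runs at most $|\mathcal{K}|$ times since each iteration permanently fixes or removes one candidate, bound each iteration by $O(|\mathcal{K}|(|V|+|\mathcal{K}|))$ (dominated by the DFS connectivity check on the induced subgraph over $D$ and the coverage test), and multiply. The minor differences in per-line accounting (e.g., your $O(|\mathcal{K}|)$ scan for Line 6 versus the paper's $O(|\mathcal{K}|^2)$, and your $O(|\mathcal{K}|\,|V|)$ for Lines 9--11 versus the paper's $O(|\mathcal{K}|^2)$) do not affect the final bound.
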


\begin{proof}
Since the Proposed algorithm calls the INITIALIZATION function, we first check the time complexity of Algorithm \ref{P1_algo}. The time complexity for lines 2-5, and 7-11 of Algorithm \ref{P1_algo} are $O(|\mathcal{K}|^2)$ and $O(|V|.|\mathcal{K}|)$, respectively. Hence, the running time of Algorithm \ref{P1_algo} (or line 1) of the Proposed algorithm becomes $O(|K|.(|V|+|K|))$. Line 5, and 6 of the Proposed algorithm have the time complexity $O(|\mathcal{K}|.|log(|\mathcal{K}|))$ (the time needed to sort the list and make a binary search), and $O(|\mathcal{K}|^{2})$ respectively. In addition, the procedure of checking if a graph $G$ (with $D$ nodes and $E$ edges) is connected or not, has the time complexity $O(|D|+|E|)$, which is the time needed for running the depth first search. It implies that the running time of the first condition in line 7 is $(O(|\mathcal{K}|+|\mathcal{K}|^2))$. The second condition in line 7 has the complexity of $O(|V|.|\mathcal{K}|)$. The cost of lines 9-11 is $O(|\mathcal{K}|^{2})$. The \textbf{while} loop in line 4 will repeated for $|\mathcal{K}|$ time, because in each step a node in $\mathcal{K}$ will be either fixed or removed. Hence, the total time complexity of the Proposed algorithm can be expressed as $O(|\mathcal{K}|(|V|.|\mathcal{K}|+|\mathcal{K}|^2))$ = $O(|\mathcal{K}^2|(|V| +|\mathcal{K}|))$.
\end{proof}

\vspace{-0.1in}

\section{Performance Guarantee Analysis}
\label{secPerformanceProof}

In this section, we analyze the performance guarantee of the Proposed algorithm presented in aforestated Section \ref{secSolution}. In order to accomplish this, we first reduce graph $G$ into graph $G_r$ in the following manner. 

Consider a candidate location $i \in D$ that provide wireless coverage to ground users which are not in the coverage area of any $j \in D \setminus \{i\}$. Let $U_i \in V$ be the set of ground users in the coverage area of candidate location $i$. We remove all ground users in the set $U_i$, and instead, add one dummy node $u_i$ (representing all the ground users in $U_i$)  connected to candidate location $i$. By repeating this procedure, until no ground users remain, we will have a set $\mathcal{U}= \cup_{i} u_i$. Let graph $G_r(\mathcal{U} \cup \mathcal{K},E'\cup E'')$ where $E''$ is the set of edges between added nodes $u_i \in \mathcal{U}$ and UAV $i$. Note that each node in $D$ is connected to a unique node $u_i \in \mathcal{U}$ or employed to ensure the connectivity of subset $D$, otherwise it could have been removed (recall such a node corresponds to a fixed candidate location in Algorithm \ref{alg:BoaRD}).

\begin{lemma} \label{max_IS}
Let OPT be an MCDS of $G_r$, any maximal independent set{\footnote{Set $S$ is an independent set if the subgraph $S$ contains no edges. Then, an independent set $S\subset G$ is a maximal independent set (MIS) if and only if for every vertex $u\in G-S$ the set $S\cup\{u\}$ is not independent (i.e. the graph $S\cup\{u\}$ is not independent).}} of $G_r$ has a maximum size of $5|OPT|+1$.
\end{lemma}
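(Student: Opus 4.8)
The plan is to separate a maximal independent set $I$ of $G_r$ into its two natural parts, $I\cap\mathcal{K}$ (candidate locations) and $I\cap\mathcal{U}$ (dummy nodes), and bound each against $|\text{OPT}|$ before adding them. First I would record two structural facts about $G_r$. Since every candidate location sits at the common optimal altitude $h$, the backhaul edge set $E'$ is precisely a unit disk graph in the plane with radius $R'$: two locations are adjacent iff their Euclidean distance is at most $R'$. The dummy vertices, by the construction of $G_r$, are degree-one leaves, each attached to a single candidate location. I will use the first fact to bound $|I\cap\mathcal{K}|$ geometrically and the second to bound $|I\cap\mathcal{U}|$ combinatorially.

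The factor $5$ comes from a packing lemma, which I would prove first: in the unit disk graph $(\mathcal{K},E')$ any location $v$ is adjacent to at most five pairwise-independent locations. If $p$ and $q$ are both within $R'$ of $v$ yet mutually non-adjacent (so $\mathrm{dist}_{pq}>R'$), then the law of cosines applied to triangle $vpq$ forces $\angle pvq>60^{\circ}$; since the independent neighbors of $v$ subtend pairwise angles exceeding $60^{\circ}$ at $v$, at most five of them can surround $v$. For the dummy part, I would observe that a leaf $u_i$ can only be dominated by its unique neighbor $i$, and since $\text{OPT}$ is connected (and has at least two vertices in the nontrivial case) this forces $i\in\text{OPT}$. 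As the map $u_i\mapsto i$ is injective into $\text{OPT}$, it follows that $|I\cap\mathcal{U}|\le|\text{OPT}|$.

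It remains to bound $|I\cap\mathcal{K}|$ by $4|\text{OPT}|+1$, after which summing the two parts yields $|I|\le 5|\text{OPT}|+1$. Here I would exploit that $\text{OPT}$ is a \emph{connected} dominating set: every location in $I\cap\mathcal{K}$ is dominated by some vertex of $\text{OPT}$, and I charge it there. Rooting a spanning tree of $G_r[\text{OPT}]$, the root absorbs the full packing count of five, while each non-root vertex, whose edge toward its tree-parent consumes one of the five angular slots, absorbs at most four new independent locations, giving $5+4(|\text{OPT}|-1)=4|\text{OPT}|+1$. The delicate step — and the main obstacle — is precisely this last count: making rigorous that the parent direction removes a charging slot (so that non-root vertices contribute strictly less than the root) while avoiding double-counting locations dominated by two adjacent $\text{OPT}$ vertices. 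This is the standard but technical unit-disk-graph argument relating maximal independent sets to connected dominating sets, and I expect the careful overlap accounting, together with checking the boundary case $|\text{OPT}|=1$, to require the most work.
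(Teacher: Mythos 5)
Your proposal takes essentially the same route as the paper's proof: a spanning tree of $OPT$ traversed in preorder, the root absorbing at most five independent candidate locations (the unit-disk packing fact), each subsequent vertex absorbing at most four via the $240^{\circ}$-sector argument relative to its already-visited neighbor, and one degree-one dummy node charged per $OPT$ vertex, totalling $5|OPT|+1$. The double-counting worry you flag as the delicate step is resolved in the paper exactly as you anticipate --- by defining $W_i$ as the independent vertices adjacent to $v_i$ but to none of $v_1,\dots,v_{i-1}$, so that the $W_i$ partition the independent set --- and the only cosmetic difference is bookkeeping: the paper counts the dummy inside each $W_i$ (giving $6+5(|OPT|-1)$) rather than bounding the dummies globally by $|OPT|$ and the candidate locations by $4|OPT|+1$ as you do.
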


\begin{proof}
Inspired by the work in \cite{alzoubi_distributed_2002}, assume $OPT$ be any MCDS in $G_r$, and $W$ is any MIS of $G_r$. Let $T$ be an arbitrary spanning tree of $OPT$, and $\{v_1, v_2,\dots, v_{opt}\}$ be any arbitrary preorder traversal of $T$ after picking an arbitrary node as the root of $T$. Let $W_i$ be the set of vertices in $W$ that are adjacent to $v_i$, but none of $v_1, v_2, …, v_{i-1}$, for any $1 < i \leq opt$. Also, let $W_1$ be the set of vertices in $W$ that are adjacent to node $v_1$. Then, $W_1, W_2,\dots, W_{opt}$ form a partition of $W$. $W_1$ has at most one of the $u_i$s defined in the above discussion. All other vertices of the $W_1$, which are nodes from the set $K$, should have a pairwise distance of more than $R’$ and thus there could be at most five nodes. Therefore, $|W_1|$ could be at most 6. In other words, let us scale down all distances from $R’$ to 1, and assume two UAVs will be connected if their euclidean distance is $\leq 1$. Hence, any node will be adjacent to a maximum of five independent nodes in a Unit Disk Graph~\cite{alzoubi_distributed_2002}. Because of the preorder traversal labeling of the $T$, there is at least one node in $\{v_1, v_2, …, v_{i-1}\}$ adjacent to $v_i$, and let us name it $v_j$. Hence, by considering the coverage range of $v_i$ and $v_j$, there is a sector of at most 240 degrees, which is within the coverage area of $v_i$, but not $v_j$. Therefore, $W_i\setminus\{u_i\}$ should lie in the mentioned sector, and thus $|W_i\setminus\{u_i\}|$ is at most 4, and $|W_i|$ is at most 5. Therefore, \\$|W|=\sum_{i=1}^{opt} |W_i| \leq 6+5(opt-1) = 5.opt+1$.
\end{proof}

\begin{lemma}
Given the graph $G_r$ and the resulting subset $D$ calculated by Proposed Algorithm on graph $G$, there is an independent set of $G_r$ containing at least $|D|/4$ vertices.
\end{lemma}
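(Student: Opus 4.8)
The plan is to build an explicit independent set $I$ of $G_r$ and show $|I|\ge |D|/4$, exploiting the dichotomy guaranteed by the reduction: every node of $D$ either owns a private dummy node $u_i\in\mathcal{U}$ (collect these in a set $A$) or is kept solely to preserve connectivity of $G[D]$ (collect these in $B$, the \emph{connectors}), so that $|D|=|A|+|B|$. The first and easy ingredient is that the private dummies $\{u_i : i\in A\}$ form an independent set of size $|A|$: by construction of $G_r$ each $u_i$ is adjacent only to its own candidate location $i$, so these dummies are pairwise non-adjacent and, crucially, none is adjacent to any node of $B$. Hence an independent set $I_B\subseteq B$ can be adjoined for free, and it suffices to produce $I_B$ with $|A|+|I_B|\ge (|A|+|B|)/4$.

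Next I would analyze the connectors through a spanning tree $T$ of the (connected) backhaul graph $G[D]$. The first structural claim is that every leaf of $T$ lies in $A$: deleting a leaf leaves $T$ (hence $G[D]$) connected, so by minimality of $D$ that node cannot be a connector and must instead be fixed for coverage, i.e. own a private dummy. Consequently the number of leaves of $T$, and therefore the number of branch (degree $\ge 3$) vertices of $T$, is bounded by $|A|$. This confines the connectors $B$, up to $O(|A|)$ branch vertices, to the degree-$2$ chains of $T$.

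The key structural observation I would then establish is that a maximal chain of degree-$2$ connectors induces a \emph{chordless} path in $G[D]$: if two non-consecutive chain vertices $p_i,p_j$ were within range $R'$, the resulting shortcut edge would make every intermediate vertex dispensable for connectivity, contradicting its membership in $B$. A chordless path on $t$ vertices has an independent set of size $\lceil t/2\rceil$, so each chain contributes roughly half of its length to $I_B$. Summing over all chains, adding the $|A|$ free dummies, and using $|A|+\tfrac14|B|\ge\tfrac14(|A|+|B|)$ to absorb the branch-vertex and rounding losses, yields $|I|\ge|D|/4$. The unit-disk packing fact used in Lemma \ref{max_IS} (a node has at most five pairwise-independent neighbours after scaling $R'$ to $1$) is what lets me bound the edges \emph{between} different chains and branch vertices, so the per-chain independent sets merge with only a constant-factor loss.

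The main obstacle I anticipate is exactly this merging step: the within-chain structure is clean (a chordless path), but $G[B]$ may contain cross edges between vertices of different chains, which shrink the global independence number. Controlling these cross edges quantitatively—showing that the geometric packing limits their effect so the separate half-length independent sets combine into one of size at least $|B|/4$ after the $A$-compensation—is the delicate bookkeeping that pins down the constant $4$. The two extreme regimes are reassuring and guide the argument: densely clustered connectors have small independent sets but, being cut vertices each bounding a private region, force proportionally many members of $A$, while spread-out connectors directly yield long chordless paths; the proof must interpolate between these via the packing bound.
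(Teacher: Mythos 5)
Your decomposition is genuinely different from the paper's: you split $D$ into coverage nodes $A$ (owning private dummies) and connectivity nodes $B$, harvest the $|A|$ dummies for free, and then attack $G[B]$ through the degree-two chains of a spanning tree of $G[D]$. The paper instead iteratively extracts induced cycles from $G[D]$, obtains $\lceil|C_m|/2\rceil$ independent vertices from each (using a dummy node to fix the parity of odd cycles, which is where the $A$/$B$ dichotomy enters for it), observes that what remains after removing the cycles is a forest from which half the vertices can be taken, and then pays a further factor of $\tfrac12$ when unioning the per-piece independent sets; its constant $4$ is exactly $\tfrac12\times\tfrac12$. Your intermediate structural claims are correct and arguably cleaner than the paper's cycle bookkeeping: leaves of $T$ lie in $A$ (under the same ``every fixed node is either a coverage node or a cut vertex of the final $G[D]$'' assumption that the paper also invokes without proof), and the interior of a degree-two connector chain is chordless because a chord would render the intermediate vertices non-cut.

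However, the proposal has a genuine gap, and you name it yourself: the merging step. Producing $\lceil t/2\rceil$ independent vertices inside each chordless chain does not yield an independent set of $G[B]$, because cross edges between different chains (and edges to branch vertices) can force you to discard vertices, and you give no quantitative argument that the unit-disk packing bound limits this loss to what you need. Your budget is in fact tighter than the paper's: you must extract roughly $|B|/4-\tfrac34|A|$ independent vertices from $G[B]$, and the only slack you invoke is the trivial inequality $|A|+\tfrac14|B|\ge\tfrac14(|A|+|B|)$, which does not by itself absorb an unbounded number of cross edges. The ``two extreme regimes'' remark is the right intuition---a clique of pairwise-adjacent cut vertices forces each member to own a private appendage contributing to $A$, while spread-out connectors give long chordless paths---but interpolating between these regimes is precisely the missing lemma. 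Until you prove a statement of the form $\alpha(G[B])\ge |B|/4 - c\,|A|$ for a suitable constant $c$ (or adopt the paper's device of losing a factor of $\tfrac12$ when unioning the per-piece independent sets, and re-derive the constant accordingly), the proof is incomplete.
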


\begin{proof}
The goal is to find an independent set with the cardinality of at least $\lceil|C_m|/2\rceil$ for each induced cycle{\footnote{An induced cycle is a cycle such that no two nodes of the cycle are connected by an edge that does not itself belong to the cycle.}} $C_m=\{v_1, v_2, …, v_k\}$ in graph $G[D]$, and remove them afterwards \cite{Butenko2004}. If $|C_m|$ is even, then there is a set $I_m=\{v_i\mid i$ is even$\}$, which, by the definition of an induced cycle, is an independent set. However, if $|C_m|$ is odd, then there are two cases. First, at least one of the vertices in $C_m$, let us name it $v_i$, have a neighbor $u_i$, which is not the neighbour of other vertices in $D$, by our definition. Therefore, we can construct $I_m$ by picking $u_i$ with half of nodes from $C_m\setminus\{v_i\}$, which is a bipartite tree, so that it is independent. Second, if none of vertices in $C_m$ is connected to a user-representation node $u_i$, it implies that all the nodes in $C_m$ are fixed to ensure the connectivity of the graph, otherwise at least one of them would have been removed by Algorithm \ref{alg:BoaRD}. In this case, assume $v_i$ is a vertex of $C_i$, and it is connected to subgraph $L_i$, which is out of $C_m$. It is obvious that if we remove $v_i$ from $C_m$, there is no path from any vertices in $L_i$ to any vertices in $C_m$. Therefore, each vertex $v_i \in C_m$ is connected to a unique subgraph $L_i$. There is at least one user-representation node $u_j$ in each $L_i$, otherwise all vertices of $L_i$ should have been removed. Let us construct $I_m$ by selecting $u_j$ with half of nodes from $C\setminus\{v_i\}$ so that it is independent. We should ensure that if we select $u_j$ to construct the independent set $I_m$ corresponding to set $C_m$, $u_j$ is not needed for constructing $I_l$ corresponding to set $C_l$. If $L_i$, itself, consists of any induced cycle(s) $C_l$ with odd number of vertices, then we can find a node $u_k$ for constructing $I_l$. Because $|C_l|>2$, then $C_l$ is connected to subgraphs $\{L^{'}_1,...,L^{'}_{|C_l|-1}\} \subset L_i$. It implies that there are more than one user-representation node to select for constructing $I_m$, and $I_l$. It is clear that when the number of vertices in the graph $D$ is finite, then the number of such induced cycles is finite. 

After removal of all induced cycles from $D$, independent trees will compose the remaining set. As shown in \cite{Butenko2004}, we can select at least half of the nodes in each tree so that they compose independent sets. An independent set $I^{*}$ can be constructed from the union of obtained independent sets with the cardinality of at least $1/2\times|\cup_{i}I_i|$, and thus $|I^{*}| \geq |D|/4$.
\end{proof}
Similar to \cite{Butenko2004}, theorem provided as follows can estimate the Proposed algorithm performance guarantees.
\begin{theorem}
The cardinality of the subset D computed by the Proposed algorithm is at most $20OPT+4$, where $OPT$ is the size of an MCDS of $G_r$.
\end{theorem} 
\begin{proof}
From the Lemma 2 and 1, we have

$5.opt+1 \geq |Maximal\:Independent\:Set| \geq |I^{*}| \geq |D|/4 \\
 \Rightarrow 20.opt+4 \geq |D|$.
\end{proof}

\vspace{-0.2in}
\section{Simulation Experiments}
\label{secResults}
In this section, we first showcase how well the Proposed algorithm performs compared to the Optimal algorithm (using ILP solver) for small scenarios only (ILP solver takes several hours, rather days, to run for larger scenarios.). Following this, we focus on the performance analysis of the Proposed algorithm against three comparison/baseline algorithms, namely, Greedy algorithm (No backhaul constraint), Backhaul-aware Greedy (BaG) algorithm and Random algorithm, for general (larger) scenarios in the rest of the section.

\vspace{-0.15in}

\subsection{Comparative Analysis of Proposed and Optimal algorithms}

Before we evaluate the performance of the proposed algorithm for general simulation scenarios, we first showcase how our Proposed algorithm compares with the Optimal algorithm (solved using ILP solver called Gurobi optimizer). The experiments are done for a small scenario where the size of the area is $9 \times 9$ sq. Km, and the number of users is 40. The rest of the simulation parameters are listed in table \ref{tab:simulation_parameters}. 

As shown in table \ref{Table:optimal_vs_Proposed}, the Proposed algorithm utilizes the same or slightly higher number of UAVs for forming a multi-UAV networks compared to that of Optimal approach. Figure \ref{fig:resulting_graphs_proposed_vs_optimal} shows the resultant Multi-UAV network using Proposed and Optimal algorithms. These results show that the Proposed algorithm performs really well and is close to the Optimal algorithm for smaller scenarios. 

Moreover, table \ref{Table:exec_time_optimal_vs_Proposed} shows the exponential time complexity of the Optimal algorithm for increasing area size (i.e., increasing number
of candidate locations). We can see that running Optimal algorithm may take several hours for larger scenarios, and thus, we do not show the results for Optimal algorithm in the large-scale simulation experiments presented in the following section.

\vspace{-0.1in}

\begin{table}[ht!]
\caption{No. of UAVs (Optimal vs. Proposed algorithm)}\label{Table:optimal_vs_Proposed}
\resizebox{\columnwidth}{!}{
\begin{tabular}{|c|c|c|}
\hline
\textbf{Number of ground users (\#GU)} & \textbf{Approach} & \textbf{Number of UAVs}
\\ \hline
\multirow{2}{6em}{{\#GU = 20}} 
& {Optimal} & {3} \\
\cline{2-3}
& {Proposed} & {3} \\
\hline

\multirow{2}{6em}{{\#GU = 40}} 
& {Optimal} & {4} \\
\cline{2-3}
& {Proposed} & {5} \\
\hline

\multirow{2}{6em}{{\#GU = 60}} 
& {Optimal} & {5} \\
\cline{2-3}
& {Proposed} & {6} \\
\hline
\end{tabular}
}
\end{table}

\vspace{-0.1in}

\begin{table}[ht!]
\caption{Execution time (Optimal vs. Proposed algorithm)}\label{Table:exec_time_optimal_vs_Proposed}
\resizebox{\columnwidth}{!}{
\begin{tabular}{|c|c|c|}
\hline
\textbf{Area size (sq. Km)} & \textbf{Approach} & \textbf{Time for execution (s)}
\\ \hline
\multirow{2}{3em}{{8 $\times$ 8}} 
& {Optimal} & {36.67} \\
\cline{2-3}
& {Proposed} & {0.13} \\
\hline

\multirow{2}{3em}{{9 $\times$ 9}} 
& {Optimal} & {1198.89} \\
\cline{2-3}
& {Proposed} & {0.14} \\
\hline

\multirow{2}{4em}{{10 $\times$ 10}} 
& {Optimal} & {22625.85} \\
\cline{2-3}
& {Proposed} & {0.16} \\
\hline
\end{tabular}
}
\end{table}

\begin{figure}[h!]
\begin{center}
  \includegraphics[scale=0.46]{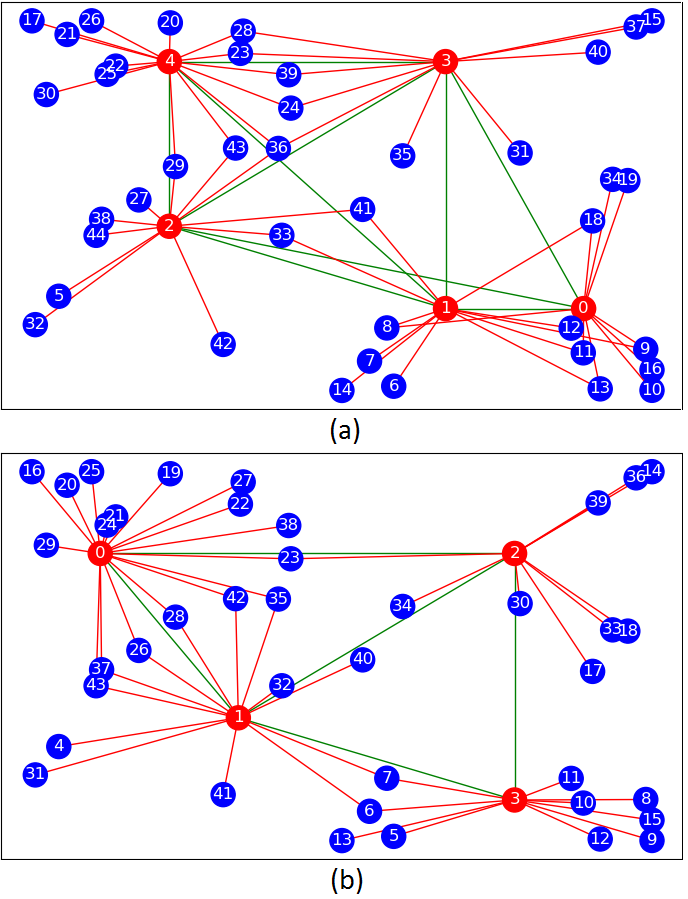}
  \vspace{-0.1in}
  \caption{Resultant Multi-UAV Network topology: (a) Proposed and (b) Optimal algorithms. Red nodes are UAV candidate locations, while blue ones are ground users. Green and red edges respectively represent the backhaul and access links.
  }\label{fig:resulting_graphs_proposed_vs_optimal}
  \end{center}
  \vspace{-0.15in}
\end{figure}

\subsection{Simulation Setting}
Unless otherwise stated, we consider a simulation area of size $50$ km $\times$ $50$ km with $200$ ground users. The major parameters of the simulation setting are listed in Table~\ref{tab:simulation_parameters}.

\begin{table}[h]
 \centering
 \caption{Simulation parameters}
 \label{tab:simulation_parameters}
    \begin{tabular}{|c|c|}
    \hline
    \textbf{Parameters}  & \textbf{Value} \\ \hline
    Simulation Area & 50km $*$ 50km \\ \hline
    Number of Users & 200 \\ \hline
    Carrier frequency ($f_c$) & 2 GHz \\ \hline
    LOS System Loss ($\eta ^L$)  & 0.1 dB \\ \hline
    NLOS System Loss ($\eta ^{N}$)  & 21 dB \\ \hline
    Environment Parameters ($a, b$) & 4.88, 0.429 \\ \hline
    UAV Transmit Power ($P_j$) & 1 W \\ \hline
    Bandwidth & 15 MHz \\ \hline
    Noise Power Spectral Density & -174 dBm/Hz  \\ \hline
    SNR Threshold for user coverage ($\gamma_{0}$) & 4 dB     \\ \hline
 \end{tabular}
\end{table}

The optimal altitude ($h$) for placing the UAVs, and its corresponding radio coverage radius on ground ($R$) are computed as $1500$ and $3300$ m, respectively for a sub-urban area as per the seminal work \cite{al2014optimal}. Refer to Table \ref{tab:simulation_parameters} for the considered the values of environmental parameters, LoS and NLoS system loss and other required parameters for calculating $h$ and $R$. Using the Eq. \ref{pathloss_value_cal}, and \ref{SNR_value_cal}, the UAV coverage radius $R^{'}$ is calculated. For instance, we observe that $R^{'} \approx 2.5*R \approx 8.3$ km when the backhaul SNR threshold, $\gamma^{'}_0 = 15$ dB. However, note that different $\gamma^{'}_0$ will result in different $R'$, and is calculated accordingly for each experiment. Also, since UAVs are hovering over the air constantly, we considered a higher SNR threshold to address the antenna pointing variations and improve the backhaul links.

In our experiments, grounds users are randomly distributed in clusters, where each cluster houses $10 - 15$ ground users, in the considered simulation area. This is a realistic user distribution in case of post-disaster scenarios~\cite{shah2017ctr} or rural/remote areas~\cite{shah2019x}, instead of completely random distribution of ground users in the entire simulation area usually considered in the UAV literature \cite{lyu2017placement}. In order to ensure the accuracy of the results, we execute each experiment $100$ times for each algorithm and take the average value as the simulation results. 

For extensive analysis, we evaluate the Proposed algorithm against the other three comparison approaches for (1) \textit{varying number of ground users}, ranging from $50$ - $500$, (2) \textit{varying area sizes}, from $(10$ km $\times 10$ km) to $(100$ km $\times 100$ km), both for varying backhaul SNR thresholds, from $10$ dB to $20$ dB.

\begin{figure*}
\begin{center}
  \includegraphics[scale=0.5]{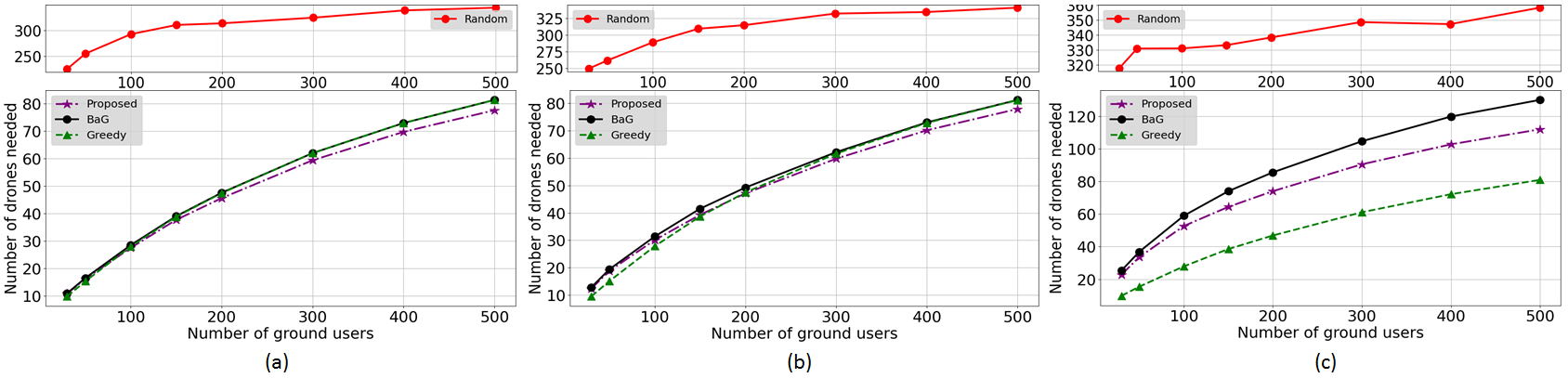}
   \caption{Number of UAVs needed vs. Number of ground users when SNR threshold $\gamma^{'}_0$ is (a) $10dB$, (b) $15dB$, (c) $20dB$
  }\label{drones_vs_users}
  \end{center}
  \vspace{-0.15in}
\end{figure*}

\begin{figure*}
\begin{center}
  \includegraphics[scale=0.5]{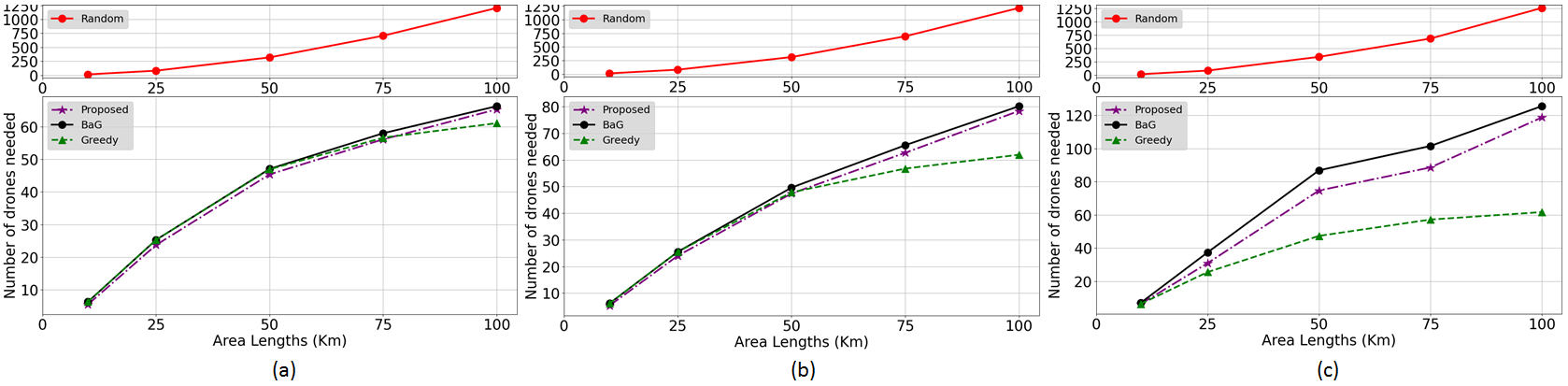}
  \vspace{-0.08in}
  \caption{Number of UAVs needed vs. Area sizes, when SNR threshold $\gamma^{'}_0$ is (a) $10dB$. (b) $15dB$. (c) $20dB$}
  \label{drones_vs_areas}
  \end{center}
  \vspace{-0.25in}
\end{figure*}

\subsection{Comparison Algorithms} \label{comparison_algorithms}

\subsubsection{Greedy Algorithm (No backhaul)} Greedy algorithm has been proposed in the literature~\cite{zhang20183D,fan2019towards} to determine the minimum number of UAVs (and its 3D placement) required to provide wireless coverage to all ground users. The algorithm works in the following manner: The algorithm first sorts the candidate locations $\mathcal{K}$ in the non-increasing order of the number of ground users associated with different locations $k \in \mathcal{K}$, i.e., ground user density. Denote $\hat{\mathcal{K}}$ as the sorted list of candidate locations. Then, the UAVs are placed sequentially at candidate locations with higher user density until all ground users are covered. Note this algorithm does not always guarantee backhaul connectivity among UAVs.

\subsubsection{Backhaul-aware Greedy Algorithm (BaG)} We extend Greedy algorithm to account for backhaul connectivity among UAVs, which we call \textit{Backhaul-aware Greedy Algorithm} (BaG). The algorithm works as follows: 

First, BaG utilizes Greedy Algorithm to deploy the minimum number of UAVs at optimal 3D candidate locations (denoted as the subset $D$) that ensures wireless coverage to all ground users. Following this, we utilize the concept of \textit{Euclidean Minimum Spanning Tree} (EMST){\footnote{Considering a set of $k$ points in the plane (or more generally in $\mathbb{R}^{d}$), its Euclidean minimum spanning tree (EMST) is a minimum spanning tree, where the weight of the edge between each pair of points is the Euclidean distance between those two points.}} to determine the minimum spanning tree of subset $D$ where the weight of the edge between each pair of UAVs is the euclidean distance between them. However, since the only possible 3D locations for placement of UAVs are candidate locations in set $\mathcal{K}$, the edge set $H$ in EMST can not be directly utilized for the placement of additional UAVs in order to ensure the backhaul connectivity. To address this, we set the weights of the existing edges between candidate locations as $1$. Now for each selected edge $e_{jk} \in H$ (where $j, k \in D$), we utilize Dijkstra's algorithm to find the shortest path between end nodes $j$ and $k$. It returns the minimum number of additional candidate locations (relay nodes), denoted by $\hat{D}$, that are required to connect the end nodes of the selected edge $e_{jk}$. We repeat the process for all edges in set $H$ and include the additional candidate locations to the set $\hat{D}$. Since there may be overlaps of candidate locations between two or more shortest paths for different edges in set $H$, we only include the additional candidate locations for a certain path if those additional candidate locations are not already included in the set $\hat{D}$. The set $D \cup \hat{D}$ is the final solution of BaG algorithm.

\subsubsection{Random Algorithm} 
Random algorithm deploys UAVs at randomly chosen candidate locations until and unless all ground users are covered and the deployed UAVs ensure backhaul connectivity in the formed Multi-UAV Network.

\subsection{Experimental Results}
\textbf{Varying number of ground users.} Fig. \ref{drones_vs_users} shows the impact of varying number of ground users on the number of required UAVs, under different backhaul SNR thresholds $\gamma^{'}_0$. In particular, we consider $\gamma^{'}_0$ as $10dB$, $15dB$, and $20dB$ and  the corresponding results for each case is reported in Fig. \ref{drones_vs_users} (a), \ref{drones_vs_users} (b), and \ref{drones_vs_users} (c) respectively. The number of UAVs required by all algorithms gradually increases with increasing number of ground users, under any considered value of $\gamma_{0}^{'}$. This is because more UAVs will be required to provide wireless coverage to increasing number of users (spread out in multiple clusters in the area). Moreover, the number of UAVs required by all algorithms (except for Greedy) increases with increasing $\gamma^{'}$, even for a fixed number of ground users. This is because UAV coverage radius $R^{'}$ decreases with increasing $\gamma_0^{'}$ and therefore, more number of UAVs would be required for ensuring backhaul connectivity. Since Greedy does not ensure backhaul connectivity, the number of required UAVs in this case remains constant and does not change with changing $\gamma^{'}_0$.

The Proposed algorithm outperforms the other two backhaul-aware algorithms, i.e., BaG and Random, for varying number of ground users under all considered backhaul SNR thresholds. The Proposed algorithm requires fewer number of UAVs by up to $15\%$ and $95\%$ compared to that of BaG and Random algorithm respectively. This shows the superiority of the Proposed algorithm in solving the BoaRD problem, i.e., optimizing the number and placement of UAVs while ensuring backhaul connectivity and ground user coverage. It is noteworthy that the difference in the number of UAVs required by the Proposed algorithm and BaG is even larger for higher $\gamma^{'}_0$. This is because when $\gamma^{'}_{0}$ is high, UAVs have to be placed close to each other to maintain the successful connection between UAVs (for backhaul connectivity), and BaG necessitates relatively large number of UAVs compared to that of the Proposed algorithm. For clarity of exposition, we employ Fig. \ref{resulting_graphs} (a) and \ref{resulting_graphs} (b) that depicts the resultant UAV placement (and multi-UAV network) corresponding to the Proposed and BaG for a simple simulation setting with $60$ ground users in an area of $75$ Km $\times$ $75$ Km. (We consider $60$ ground users for the clarity of the plot.) Here the Proposed algorithm requires $40$ UAVs whereas BaG requires $50$ UAVs.

\begin{figure}
\begin{center}
  \includegraphics[scale=0.5]{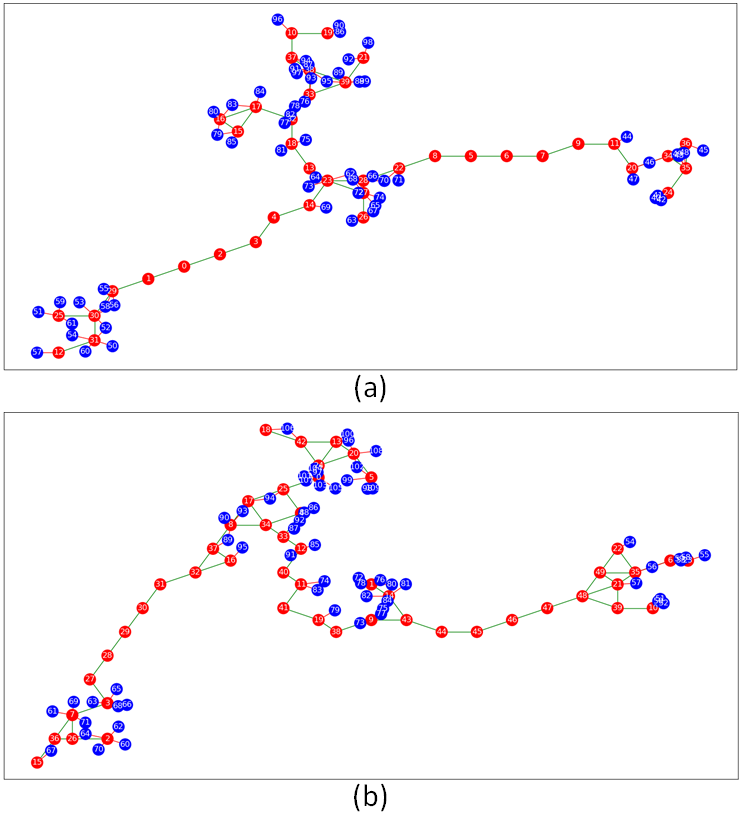}
  \caption{Resultant Multi-UAV Network topology: (a) Proposed and (b) BaG algorithms. Red nodes are UAV candidate locations, while blue ones are ground users. Green and red edges respectively represent the backhaul and access links.
  }\label{resulting_graphs}
  \end{center}
  \vspace{-0.3in}
\end{figure}

As expected, the Greedy algorithm requires a relatively smaller number of UAVs to ensure wireless coverage to all ground users, for almost all considered cases in Fig \ref{drones_vs_users}. Note that Greedy does not  guarantee backhaul connectivity, and thus does not always provide a solution to \Board problem. It is interesting to note that for lower SNR thresholds such as $10$ dB (Fig. \ref{drones_vs_users} (a)) and $15$ dB (Fig. \ref{drones_vs_users} (b)), the Proposed algorithm outperforms the Greedy algorithm by up to $13\%$ in the reduction of required UAVs. There are two reasons for this. First, since the UAV coverage radius $(R^{'})$ is large (due to lower SNR thresholds), negligible number of additional UAVs are required to ensure backhaul connectivity. Second, fewer UAVs may be required to ensure ground user coverage in case of the Proposed algorithm. Let us explain this with a simple example.  Assume 6 ground users $\{v_1,...,v_6\} \in V$ could be covered by 3 UAV candidate locations $\{k_1,k_2,k_3\} \in \mathcal{K}$ that are pairwise connected (backhaul connectivity ensured). Assume $\{v_1,v_2,v_3\}$, $\{v_2, v_3, v_4,v_5\}$, and $\{v_4,v_5,v_6\}$ are in the coverage area of $k_1$, $k_2$, and $k_3$ respectively. Therefore, by applying the Proposed algorithm, selected candidate locations will be $\{k_1,k_3\}$, while greedy algorithm selects $\{k_1,k_2,k_3\}$. 

\textbf{Varying area sizes.} Fig. \ref{drones_vs_areas} represents the variation of number of required UAVs w.r.t varying size of areas under various SNR thresholds, i.e., $\gamma^{'}_0$ = $10dB$, $15dB$, and $20dB$. As expected, the number of UAVs required by all algorithms increases with increasing area size, under all values of $\gamma^{'}_0$. This is intuitive as more UAVs will be required to provide wireless coverage to users spread out in the larger area sizes and maintain the backhaul connectivity. The number of UAVs required by the Proposed algorithm is fewer than BaG and Random algorithms, respectively by up to $17\%$ and $95\%$. The Proposed algorithm significantly outperforms both backhaul-aware algorithms, i.e., BaG and Random, for the same reasons discussed before that the Proposed algorithm solves BoaRD problem efficiently.

Interestingly, compared to the Greedy algorithm, the Proposed algorithm requires fewer number of UAVs for smaller area sizes ($< 50$ km $\times$ $50$ km) and lower $\gamma^{'}_{0}$ ($10$ and $15$ dB). However, it gradually increases afterwards mainly because of the increased number of UAVs required for ensuring backhaul connectivity. When the area becomes larger, the distances between the cluster of the users in the area are larger, and thus, more UAVs are needed to ensure the backhaul connectivity. Since, as discussed earlier, Greedy does not ensure backhaul connectivity, more UAVs are needed by Proposed and BaG algorithms compared to Greedy approach in larger areas. Also, larger $\gamma^{'}_{0}$ values results in shorter backhaul links, which increase the number of UAVs needed by Proposed and BaG algorithms in comparison to the Greedy algorithm to maintain the backhaul connectivity.

\section{Conclusions and Future Work} 
\label{secConclusion}
In this paper, we investigated the joint optimization of number, placement and backhaul connectivity of multi-UAV networks, such that, the network provides wireless coverage to all ground users in the region of operation. We formulated the above problem, named, Backhaul-and-Coverage-aware Drone Deployment (BoaRD) problem as ILP problem and showed that it is NP-Hard. Utilizing graph theoretic concepts, we proposed a low computational complexity algorithm to solve the BoaRD problem with provable performance guarantees. Our extensive simulations demonstrated the superiority of the Proposed algorithm in minimizing the number of UAVs to form a Multi-UAV Network, when compared to both backhaul-aware greedy and random algorithms for all considered scenarios. Interestingly, the Proposed algorithm even outperformed the baseline greedy algorithm (no backhaul) for higher ground user density and lower backhaul SNR thresholds, which further corroborated the efficacy of the Proposed algorithm. In future, we will explore the problem of UAV placement (with fewest number of UAVs) such that the formed Multi-UAV Network is resilient against various UAV node and link failures and provides end-to-end wireless coverage to ground users in the region of operation. Another interesting future direction is to investigate designing the Multi-UAV Networks under very high mobility ground users.

\bibliographystyle{IEEEtran}
\bibliography{myref, UAV}

@ARTICLE{9422729,  author={Hazra, Krishnandu and Shah, Vijay K. and Roy, Satyaki and Deep, Swaraj and Saha, Sujoy and Nandi, Subrata},  journal={IEEE Transactions on Network and Service Management},   title={Exploring Biological Robustness for Reliable Multi-UAV Networks},   year={2021},  volume={18},  number={3},  pages={2776-2788},  doi={10.1109/TNSM.2021.3077544}}


@INPROCEEDINGS{9484638,  author={Dai, Yating and Guo, Yijun and Hao, Jianjun},  booktitle={IEEE INFOCOM 2021 - IEEE Conference on Computer Communications Workshops (INFOCOM WKSHPS)},   title={UAV Placement and Resource Allocation for Multi-hop UAV Assisted Backhaul System},   year={2021},  volume={},  number={},  pages={1-6},  doi={10.1109/INFOCOMWKSHPS51825.2021.9484638}}

@ARTICLE{8482444,  author={Nguyen, Tri Minh and Ajib, Wessam and Assi, Chadi},  journal={IEEE Journal on Selected Areas in Communications},   title={A Novel Cooperative NOMA for Designing UAV-Assisted Wireless Backhaul Networks},   year={2018},  volume={36},  number={11},  pages={2497-2507},  doi={10.1109/JSAC.2018.2874136}}

@INPROCEEDINGS{8422376,  author={Galkin, Boris and Kibilda, Jacek and DaSilva, Luiz A.},  booktitle={2018 IEEE International Conference on Communications (ICC)},   title={Backhaul for Low-Altitude UAVs in Urban Environments},   year={2018},  volume={},  number={},  pages={1-6},  doi={10.1109/ICC.2018.8422376}}


@ARTICLE{9364991,  author={Iradukunda, Nadia and Pham, Quoc-Viet and Zeng, Ming and Kim, Hee-Cheol and Hwang, Won-Joo},  journal={IEEE Access},   title={UAV-Enabled Wireless Backhaul Networks Using Non-Orthogonal Multiple Access},   year={2021},  volume={9},  number={},  pages={36689-36698},  doi={10.1109/ACCESS.2021.3062627}}

@ARTICLE{9378782,  author={Pham, Quoc-Viet and Iradukunda, Nadia and Tran, Nguyen H. and Hwang, Won-Joo and Chung, Sang-Hwa},  journal={IEEE Networking Letters},   title={Joint Placement, Power Control, and Spectrum Allocation for UAV Wireless Backhaul Networks},   year={2021},  volume={3},  number={2},  pages={56-60},  doi={10.1109/LNET.2021.3065943}}

@ARTICLE{9462529,  author={Nouri, Nima and Abouei, Jamshid and Sepasian, Ali Reza and Jaseemuddin, Muhammad and Anpalagan, Alagan and Plataniotis, Konstantinos N.},  journal={IEEE Internet of Things Journal},   title={3D Multi-UAV Placement and Resource Allocation for Energy-Efficient IoT Communication},   year={2021},  volume={},  number={},  pages={1-1},  doi={10.1109/JIOT.2021.3091166}}

@ARTICLE{9509753,  author={Hu, Bo and Wang, Lei and Chen, Shanzhi and Cui, Jian and Chen, Liangyu},  journal={IEEE Internet of Things Journal},   title={An Uplink Throughput Optimization Scheme for UAV-Enabled Urban Emergency Communications},   year={2021},  volume={},  number={},  pages={1-1},  doi={10.1109/JIOT.2021.3103892}}

@INPROCEEDINGS{9448698,  author={Santos, Gonçalo and Martins, João and Coelho, André and Fontes, Helder and Ricardo, Manuel and Campos, Rui},  booktitle={2021 IEEE 93rd Vehicular Technology Conference (VTC2021-Spring)},   title={A Fast Gateway Placement Algorithm for Flying Networks},   year={2021},  volume={},  number={},  pages={1-6},  doi={10.1109/VTC2021-Spring51267.2021.9448698}}

@INPROCEEDINGS{9530185,  author={R.R, Neetu and Gupta, Akshita and Ghatak, Gourab and Srivastava, Anand and Bohara, Vivek Ashok},  booktitle={2021 National Conference on Communications (NCC)},   title={Joint Bandwidth and Position Optimization in UAV Networks Deployed for Disaster Scenarios},   year={2021},  volume={},  number={},  pages={1-6},  doi={10.1109/NCC52529.2021.9530185}}

@ARTICLE{9521913,  author={Nafees, Muhammad and Thompson, John and Safari, Majid},  journal={IEEE Access},   title={Multi-Tier Variable Height UAV Networks: User Coverage and Throughput Optimization},   year={2021},  volume={9},  number={},  pages={119684-119699},  doi={10.1109/ACCESS.2021.3107674}}

@ARTICLE{9467352,  author={Kuo, Yung-Ching and Chiu, Jen-Hao and Sheu, Jang-Ping and Hong, Y.-W. Peter},  journal={IEEE Transactions on Green Communications and Networking},   title={UAV Deployment and IoT Device Association for Energy-Efficient Data-Gathering in Fixed-Wing Multi-UAV Networks},   year={2021},  volume={},  number={},  pages={1-1},  doi={10.1109/TGCN.2021.3093453}}

@article{mayor_deploying_2019,
	title = {Deploying a {Reliable} {UAV}-{Aided} {Communication} {Service} in {Disaster} {Areas}},
	volume = {2019},
	issn = {1530-8669, 1530-8677},
	url = {https://www.hindawi.com/journals/wcmc/2019/7521513/},
	doi = {10.1155/2019/7521513},
	abstract = {When telecommunication infrastructure is damaged by natural disasters, creating a network that can handle voice channels can be vital for search and rescue missions. Unmanned Aerial Vehicles (UAV) equipped with WiFi access points could be rapidly deployed to provide wireless coverage to ground users. This WiFi access network can in turn be used to provide a reliable communication service to be used in search and rescue missions. We formulate a new problem for UAVs optimal deployment which considers not only WiFi coverage but also the mac sublayer (i.e., quality of service). Our goal is to dispatch the minimum number of UAVs for provisioning a WiFi network that enables reliable VoIP communications in disaster scenarios. Among valid solutions, we choose the one that minimizes energy expenditure at the user’s WiFi interface card in order to extend ground user’s smartphone battery life as much as possible. Solutions are found using well-known heuristics such as K-means clusterization and genetic algorithms. Via numerical results, we show that the IEEE 802.11 standard revision has a decisive impact on the number of UAVs required to cover large areas, and that the user’s average energy expenditure (attributable to communications) can be reduced by limiting the maximum altitude for drones or by increasing the VoIP speech quality.},
	language = {en},
	urldate = {2021-10-12},
	journal = {Wireless Communications and Mobile Computing},
	author = {Mayor, Vicente and Estepa, Rafael and Estepa, Antonio and Madinabeitia, German},
	month = apr,
	year = {2019},
	pages = {1--20},
	file = {Full Text:C\:\\Users\\asus\\Zotero\\storage\\722ECXXT\\Mayor et al. - 2019 - Deploying a Reliable UAV-Aided Communication Servi.pdf:application/pdf},
}


@ARTICLE{8642333,  author={Lai, Chuan-Chi and Chen, Chun-Ting and Wang, Li-Chun},  journal={IEEE Wireless Communications Letters},   title={On-Demand Density-Aware UAV Base Station 3D Placement for Arbitrarily Distributed Users With Guaranteed Data Rates},   year={2019},  volume={8},  number={3},  pages={913-916},  doi={10.1109/LWC.2019.2899599}}


@Article{s19092157,
AUTHOR = {Ahmed, Ashfaq and Awais, Muhammad and Akram, Tallha and Kulac, Selman and Alhussein, Musaed and Aurangzeb, Khursheed},
TITLE = {Joint Placement and Device Association of UAV Base Stations in IoT Networks},
JOURNAL = {Sensors},
VOLUME = {19},
YEAR = {2019},
NUMBER = {9},
ARTICLE-NUMBER = {2157},
URL = {https://www.mdpi.com/1424-8220/19/9/2157},
ISSN = {1424-8220},


ABSTRACT = {Drone base stations (DBSs) have received significant research interest in recent years. They provide a flexible and cost-effective solution to improve the coverage, connectivity, quality of service (QoS), and energy efficiency of large-area Internet of Things (IoT) networks. However, as DBSs are costly and power-limited devices, they require an efficient scheme for their deployment in practical networks. This work proposes a realistic mathematical model for the joint optimization problem of DBS placement and IoT users&rsquo; assignment in a massive IoT network scenario. The optimization goal is to maximize the connectivity of IoT users by utilizing the minimum number of DBS, while satisfying practical network constraints. Such an optimization problem is NP-hard, and the optimal solution has a complexity exponential to the number of DBSs and IoT users in the network. Furthermore, this work also proposes a linearization scheme and a low-complexity heuristic to solve the problem in polynomial time. The simulations are performed for a number of network scenarios, and demonstrate that the proposed heuristic is numerically accurate and performs close to the optimal solution.},
DOI = {10.3390/s19092157}
}


@Inbook{B_Morteza_Stochastic_2020, 
author={M. Banagar and V. V. Chetlur and H. S. Dhillon}, 
title={Stochastic Geometry-Based Performance Analysis of Drone Cellular Networks}, 
booktitle={UAV Communications for 5G and Beyond},
publisher={John Wiley \& Sons, Ltd},
year={2020}, 
pages={231-254},
chapter={9}}


@ARTICLE{9419071,
  author={Sabzehali, Javad and Shah, Vijay K. and Dhillon, Harpreet S. and Reed, Jeffrey H.},
  journal={IEEE Wireless Communications Letters}, 
  title={3D Placement and Orientation of mmWave-Based UAVs for Guaranteed LoS Coverage}, 
  year={2021},
  volume={10},
  number={8},
  pages={1662-1666},
  doi={10.1109/LWC.2021.3076463}}

@webpage{drone_market,
	Author = {[Online]]},
	Url = {https://dronedj.com/2019/11/25/drone-industry-100-billion-2020/},
	Year = {2020}
	}
	
@webpage{morgan_stanley,
	Author = {[Online]]},
	Url = {https://www.morganstanley.com/ideas/autonomous-aircraft},
	Year = {2020}
	}

	
@article{mozaffari2019tutorial,
  title={A tutorial on UAVs for wireless networks: Applications, challenges, and open problems},
  author={Mozaffari, Mohammad and Saad, Walid and Bennis, Mehdi and Nam, Young-Han and Debbah, M{\'e}rouane},
  journal={IEEE communications surveys \& tutorials},
  volume={21},
  number={3},
  pages={2334--2360},
  year={2019},
  publisher={IEEE}
}

@article{fotouhi2019survey,
  title={Survey on UAV cellular communications: Practical aspects, standardization advancements, regulation, and security challenges},
  author={Fotouhi, Azade and Qiang, Haoran and Ding, Ming and Hassan, Mahbub and Giordano, Lorenzo Galati and Garcia-Rodriguez, Adrian and Yuan, Jinhong},
  journal={IEEE Communications Surveys \& Tutorials},
  volume={21},
  number={4},
  pages={3417--3442},
  year={2019},
  publisher={IEEE}
}

@article{faa2020forecast,
  title={Federal Aviation Administration (FAA) Forecast Fiscal Years 2020-2040},
  author={},
  journal={https://www.faa.gov/news/fact\_sheets/news\_story.cfm?newsId=24756},
  year={2020}
}

@article{pedra,
  title={PEDRA Simulator},
  author={},
  journal={https://github.com/aqeelanwar/PEDRA},
  year={}
}

@inproceedings{ibrahim2010moving,
  title={Moving objects detection and tracking framework for UAV-based surveillance},
  author={Ibrahim, Aryo Wiman Nur and Ching, Pang Wee and Seet, GL Gerald and Lau, WS Michael and Czajewski, Witold},
  booktitle={2010 Fourth Pacific-Rim Symposium on Image and Video Technology},
  pages={456--461},
  year={2010},
  organization={IEEE}
}

@inproceedings{semsch2009autonomous,
  title={Autonomous UAV surveillance in complex urban environments},
  author={Semsch, Eduard and Jakob, Michal and Pavlicek, Du{\v{s}}an and Pechoucek, Michal},
  booktitle={2009 IEEE/WIC/ACM International Joint Conference on Web Intelligence and Intelligent Agent Technology},
  volume={2},
  pages={82--85},
  year={2009},
  organization={IEEE}
}

@inproceedings{geng2013mission,
  title={Mission planning of autonomous UAVs for urban surveillance with evolutionary algorithms},
  author={Geng, L and Zhang, YF and Wang, JJ and Fuh, Jerry YH and Teo, SH},
  booktitle={2013 10th IEEE International Conference on Control and Automation (ICCA)},
  pages={828--833},
  year={2013},
  organization={IEEE}
}

@article{puri2005survey,
  title={A survey of unmanned aerial vehicles (UAV) for traffic surveillance},
  author={Puri, Anuj},
  journal={Department of computer science and engineering, University of South Florida},
  pages={1--29},
  year={2005}
}

@inproceedings{waharte2010supporting,
  title={Supporting search and rescue operations with UAVs},
  author={Waharte, Sonia and Trigoni, Niki},
  booktitle={2010 International Conference on Emerging Security Technologies},
  pages={142--147},
  year={2010},
  organization={IEEE}
}

@article{polka2017use,
  title={The use of UAV's for search and rescue operations},
  author={P{\'o}{\l}ka, Marzena and Ptak, Szymon and Kuziora, {\L}ukasz},
  journal={Procedia engineering},
  volume={192},
  pages={748--752},
  year={2017},
  publisher={Elsevier}
}

@article{silvagni2017multipurpose,
  title={Multipurpose UAV for search and rescue operations in mountain avalanche events},
  author={Silvagni, Mario and Tonoli, Andrea and Zenerino, Enrico and Chiaberge, Marcello},
  journal={Geomatics, Natural Hazards and Risk},
  volume={8},
  number={1},
  pages={18--33},
  year={2017},
  publisher={Taylor \& Francis}
}

@article{zhang2018analysis,
  title={Analysis and optimization of multiple unmanned aerial vehicle-assisted communications in post-disaster areas},
  author={Zhang, Shangwei and Liu, Jiajia},
  journal={IEEE Transactions on Vehicular Technology},
  volume={67},
  number={12},
  pages={12049--12060},
  year={2018},
  publisher={IEEE}
}

@article{hayajneh2018performance,
  title={Performance analysis of UAV enabled disaster recovery networks: A stochastic geometric framework based on cluster processes},
  author={Hayajneh, Ali Mohammad and Zaidi, Syed Ali Raza and McLernon, Des C and Di Renzo, Marco and Ghogho, Mounir},
  journal={IEEE Access},
  volume={6},
  pages={26215--26230},
  year={2018},
  publisher={IEEE}
}


@article{tokekar2016sensor,
  title={Sensor planning for a symbiotic UAV and UGV system for precision agriculture},
  author={Tokekar, Pratap and Vander Hook, Joshua and Mulla, David and Isler, Volkan},
  journal={IEEE Transactions on Robotics},
  volume={32},
  number={6},
  pages={1498--1511},
  year={2016},
  publisher={IEEE}
}

@article{honkavaara2013processing,
  title={Processing and assessment of spectrometric, stereoscopic imagery collected using a lightweight UAV spectral camera for precision agriculture},
  author={Honkavaara, Eija and Saari, Heikki and Kaivosoja, Jere and P{\"o}l{\"o}nen, Ilkka and Hakala, Teemu and Litkey, Paula and M{\"a}kynen, Jussi and Pesonen, Liisa},
  journal={Remote Sensing},
  volume={5},
  number={10},
  pages={5006--5039},
  year={2013},
  publisher={Multidisciplinary Digital Publishing Institute}
}

@article{mogili2018review,
  title={Review on application of drone systems in precision agriculture},
  author={Mogili, UM Rao and Deepak, BBVL},
  journal={Procedia computer science},
  volume={133},
  pages={502--509},
  year={2018},
  publisher={Elsevier}
}

@inproceedings{alsalam2017autonomous,
  title={Autonomous UAV with vision based on-board decision making for remote sensing and precision agriculture},
  author={Alsalam, Bilal Hazim Younus and Morton, Kye and Campbell, Duncan and Gonzalez, Felipe},
  booktitle={2017 IEEE Aerospace Conference},
  pages={1--12},
  year={2017},
  organization={IEEE}
}

@article{maes2019perspectives,
  title={Perspectives for remote sensing with unmanned aerial vehicles in precision agriculture},
  author={Maes, Wouter H and Steppe, Kathy},
  journal={Trends in plant science},
  volume={24},
  number={2},
  pages={152--164},
  year={2019},
  publisher={Elsevier}
}

@article{shakhatreh2019unmanned,
  title={Unmanned aerial vehicles (UAVs): A survey on civil applications and key research challenges},
  author={Shakhatreh, Hazim and Sawalmeh, Ahmad H and Al-Fuqaha, Ala and Dou, Zuochao and Almaita, Eyad and Khalil, Issa and Othman, Noor Shamsiah and Khreishah, Abdallah and Guizani, Mohsen},
  journal={Ieee Access},
  volume={7},
  pages={48572--48634},
  year={2019},
  publisher={IEEE}
}

@article{gago2020nano,
  title={Nano and Micro Unmanned Aerial Vehicles (UAVs): A New Grand Challenge for Precision Agriculture?},
  author={Gago, Jorge and Estrany, Joan and Estes, Lyndon and Fernie, Alisdair R and Alorda, Bartomeu and Brotman, Yariv and Flexas, Jaume and Escalona, Jos{\'e} Mariano and Medrano, Hip{\'o}lito},
  journal={Current Protocols in Plant Biology},
  volume={5},
  number={1},
  pages={e20103},
  year={2020},
  publisher={Wiley Online Library}
}

@article{wang2019multiple,
  title={Multiple access mmWave design for UAV-aided 5G communications},
  author={Wang, Lu and Che, Yue Ling and Long, Jinfeng and Duan, Lingjie and Wu, Kaishun},
  journal={IEEE Wireless Communications},
  volume={26},
  number={1},
  pages={64--71},
  year={2019},
  publisher={IEEE}
}

@article{wu2019fundamental,
  title={Fundamental trade-offs in communication and trajectory design for UAV-enabled wireless network},
  author={Wu, Qingqing and Liu, Liang and Zhang, Rui},
  journal={IEEE Wireless Communications},
  volume={26},
  number={1},
  pages={36--44},
  year={2019},
  publisher={IEEE}
}

@article{zeng2019accessing,
  title={Accessing from the sky: A tutorial on UAV communications for 5G and beyond},
  author={Zeng, Yongs and Wu, Qingqing and Zhang, Rui},
  journal={Proceedings of the IEEE},
  volume={107},
  number={12},
  pages={2327--2375},
  year={2019},
  publisher={IEEE}
}

@article{motlagh2017uav,
  title={UAV-based IoT platform: A crowd surveillance use case},
  author={Motlagh, Naser Hossein and Bagaa, Miloud and Taleb, Tarik},
  journal={IEEE Communications Magazine},
  volume={55},
  number={2},
  pages={128--134},
  year={2017},
  publisher={IEEE}
}

@article{yuan2018ultra,
  title={Ultra-reliable IoT communications with UAVs: A swarm use case},
  author={Yuan, Zhenhui and Jin, Jie and Sun, Lingling and Chin, Kwan-Wu and Muntean, Gabriel-Miro},
  journal={IEEE Communications Magazine},
  volume={56},
  number={12},
  pages={90--96},
  year={2018},
  publisher={IEEE}
}

@inproceedings{zheng2013modeling,
  title={Modeling and simulation of pathloss and fading for air-ground link of HAPs within a network simulator},
  author={Zheng, Yi and Wang, Yuwen and Meng, Fanji},
  booktitle={2013 International Conference on Cyber-Enabled Distributed Computing and Knowledge Discovery},
  pages={421--426},
  year={2013},
  organization={IEEE}
}

@inproceedings{matolak2012air,
  title={Air-ground channels \& models: Comprehensive review and considerations for unmanned aircraft systems},
  author={Matolak, David W},
  booktitle={2012 IEEE Aerospace Conference},
  pages={1--17},
  year={2012},
  organization={IEEE}
}

@article{matolak2016air,
  title={Air--ground channel characterization for unmanned aircraft systems—Part I: Methods, measurements, and models for over-water settings},
  author={Matolak, David W and Sun, Ruoyu},
  journal={IEEE Transactions on Vehicular Technology},
  volume={66},
  number={1},
  pages={26--44},
  year={2016},
  publisher={IEEE}
}

@inproceedings{al2014modeling,
  title={Modeling air-to-ground path loss for low altitude platforms in urban environments},
  author={Al-Hourani, Akram and Kandeepan, Sithamparanathan and Jamalipour, Abbas},
  booktitle={IEEE global communications conference},
  pages={2898--2904},
  year={2014},
  organization={IEEE}
}

@inproceedings{yanmaz2013achieving,
  title={Achieving air-ground communications in 802.11 networks with three-dimensional aerial mobility},
  author={Yanmaz, Ev{\c{s}}en and Kuschnig, Robert and Bettstetter, Christian},
  booktitle={2013 Proceedings IEEE INFOCOM},
  pages={120--124},
  year={2013},
  organization={IEEE}
}

@article{mozaffari2017mobile,
  title={Mobile unmanned aerial vehicles (UAVs) for energy-efficient Internet of Things communications},
  author={Mozaffari, Mohammad and Saad, Walid and Bennis, Mehdi and Debbah, M{\'e}rouane},
  journal={IEEE Transactions on Wireless Communications},
  volume={16},
  number={11},
  pages={7574--7589},
  year={2017},
  publisher={IEEE}
}


@article{al2014optimal,
  title={Optimal LAP altitude for maximum coverage},
  author={Al-Hourani, Akram and Kandeepan, Sithamparanathan and Lardner, Simon},
  journal={IEEE Wireless Communications Letters},
  volume={3},
  number={6},
  pages={569--572},
  year={2014},
  publisher={IEEE}
}

@article{mozaffari2016efficient,
  title={Efficient deployment of multiple unmanned aerial vehicles for optimal wireless coverage},
  author={Mozaffari, Mohammad and Saad, Walid and Bennis, Mehdi and Debbah, M{\'e}rouane},
  journal={IEEE Communications Letters},
  volume={20},
  number={8},
  pages={1647--1650},
  year={2016},
  publisher={IEEE}
}

@article{alzenad20173,
  title={3-D placement of an unmanned aerial vehicle base station (UAV-BS) for energy-efficient maximal coverage},
  author={Alzenad, Mohamed and El-Keyi, Amr and Lagum, Faraj and Yanikomeroglu, Halim},
  journal={IEEE Wireless Communications Letters},
  volume={6},
  number={4},
  pages={434--437},
  year={2017},
  publisher={IEEE}
}



@inproceedings{kalantari2016number,
  title={On the number and 3D placement of drone base stations in wireless cellular networks},
  author={Kalantari, Elham and Yanikomeroglu, Halim and Yongacoglu, Abbas},
  booktitle={2016 IEEE 84th Vehicular Technology Conference (VTC-Fall)},
  pages={1--6},
  year={2016},
  organization={IEEE}
}

@article{wang2020placement,
  title={Placement of unmanned aerial vehicles for directional coverage in 3D space},
  author={Wang, Weijun and Dai, Haipeng and Dong, Chao and Cheng, Xiao and Wang, Xiaoyu and Yang, Panlong and Chen, Guihai and Dou, Wanchun},
  journal={IEEE/ACM Transactions on Networking},
  volume={28},
  number={2},
  pages={888--901},
  year={2020},
  publisher={IEEE}
}

@inproceedings{kalantari2017backhaul,
  title={Backhaul-aware robust 3D drone placement in 5G+ wireless networks},
  author={Kalantari, Elham and Shakir, Muhammad Zeeshan and Yanikomeroglu, Halim and Yongacoglu, Abbas},
  booktitle={2017 IEEE international conference on communications workshops (ICC workshops)},
  pages={109--114},
  year={2017},
  organization={IEEE}
}


@article{you20193d,
  title={3D trajectory optimization in Rician fading for UAV-enabled data harvesting},
  author={You, Changsheng and Zhang, Rui},
  journal={IEEE Transactions on Wireless Communications},
  volume={18},
  number={6},
  pages={3192--3207},
  year={2019},
  publisher={IEEE}
}

@article{zeng2017energy,
  title={Energy-efficient UAV communication with trajectory optimization},
  author={Zeng, Yong and Zhang, Rui},
  journal={IEEE Transactions on Wireless Communications},
  volume={16},
  number={6},
  pages={3747--3760},
  year={2017},
  publisher={IEEE}
}

@article{cheng2018uav,
  title={UAV trajectory optimization for data offloading at the edge of multiple cells},
  author={Cheng, Fen and Zhang, Shun and Li, Zan and Chen, Yunfei and Zhao, Nan and Yu, F Richard and Leung, Victor CM},
  journal={IEEE Transactions on Vehicular Technology},
  volume={67},
  number={7},
  pages={6732--6736},
  year={2018},
  publisher={IEEE}
}

@article{cui2020adaptive,
  title={Adaptive UAV-trajectory optimization under quality of service constraints: a model-free solution},
  author={Cui, Jingjing and Ding, Zhiguo and Deng, Yansha and Nallanathan, Arumugam and Hanzo, Lajos},
  journal={IEEE Access},
  pages={1--14},
  year={2020}
}

@article{xiong2019task,
  title={Task offloading in UAV-aided edge computing: Bit allocation and trajectory optimization},
  author={Xiong, Jingyu and Guo, Hongzhi and Liu, Jiajia},
  journal={IEEE Communications Letters},
  volume={23},
  number={3},
  pages={538--541},
  year={2019},
  publisher={IEEE}
}

@article{hu2019uav,
  title={UAV-assisted relaying and edge computing: Scheduling and trajectory optimization},
  author={Hu, Xiaoyan and Wong, Kai-Kit and Yang, Kun and Zheng, Zhongbin},
  journal={IEEE Transactions on Wireless Communications},
  volume={18},
  number={10},
  pages={4738--4752},
  year={2019},
  publisher={IEEE}
}

@article{huang2020trajectory,
  title={Trajectory optimization and resource allocation for UAV base stations under in-band backhaul constraint},
  author={Huang, Dongdong and Cui, Miao and Zhang, Guangchi and Chu, Xiaoli and Lin, Fan},
  journal={EURASIP Journal on Wireless Communications and Networking},
  volume={2020},
  number={1},
  pages={1--17},
  year={2020},
  publisher={SpringerOpen}
}


@article{pawar2019performance,
  title={Performance Study of Dual Unmanned Aerial Vehicles with Underlaid Device-to-Device Communications},
  author={Pawar, Praveen and Yadav, Sanjeev Mani and Trivedi, Aditya},
  journal={Wireless Personal Communications},
  volume={105},
  number={3},
  pages={1111--1132},
  year={2019},
  publisher={Springer}
}

@article{chetlur2017downlink,
  title={Downlink coverage analysis for a finite 3-D wireless network of unmanned aerial vehicles},
  author={Chetlur, Vishnu Vardhan and Dhillon, Harpreet S},
  journal={IEEE Transactions on Communications},
  volume={65},
  number={10},
  pages={4543--4558},
  year={2017},
  publisher={IEEE}
}

@article{lee2014power,
  title={Power control for D2D underlaid cellular networks: Modeling, algorithms, and analysis},
  author={Lee, Namyoon and Lin, Xingqin and Andrews, Jeffrey G and Heath, Robert W},
  journal={IEEE Journal on Selected Areas in Communications},
  volume={33},
  number={1},
  pages={1--13},
  year={2014},
  publisher={IEEE}
}

@inproceedings{cherif2019downlink,
  title={Downlink Coverage Analysis of an Aerial User in Vertical Heterogeneous Networks},
  author={Cherif, Nesrine and Alzenad, Mohamed and Yanikomeroglu, Halim and Yongacoglu, Abbas},
  booktitle={2019 IEEE Global Communications Conference (GLOBECOM)},
  pages={1--6},
  year={2019},
  organization={IEEE}
}

@inproceedings{yi2019modeling,
  title={Modeling and coverage analysis of downlink UAV networks with mmWave communications},
  author={Yi, Wenqiang and Liu, Yuanwei and Elkashlan, Maged and Nallanathan, Arumugam},
  booktitle={2019 IEEE International Conference on Communications Workshops (ICC Workshops)},
  pages={1--6},
  year={2019},
  organization={IEEE}
}

@article{nguyen2020joint,
  title={Joint D2D Assignment, Bandwidth and Power Allocation in Cognitive UAV-enabled Networks},
  author={Nguyen, HT and Tuan, HD and Duong, TQ and Poor, HV and Hwang, WJ},
  journal={IEEE Transactions on Cognitive Communications and Networking},
  year={2020},
  publisher={IEEE}
}


@article{yang2019energy,
  title={Energy-efficient joint scheduling and resource management for UAV-enabled multicell networks},
  author={Yang, Helin and Xie, Xianzhong},
  journal={IEEE Systems Journal},
  volume={14},
  number={1},
  pages={363--374},
  year={2019},
  publisher={IEEE}
}

@article{chen2019liquid,
  title={Liquid state machine learning for resource and cache management in LTE-U unmanned aerial vehicle (UAV) networks},
  author={Chen, Mingzhe and Saad, Walid and Yin, Changchuan},
  journal={IEEE Transactions on Wireless Communications},
  volume={18},
  number={3},
  pages={1504--1517},
  year={2019},
  publisher={IEEE}
}

@article{verdone2018joint,
  title={Joint aerial-terrestrial resource management in uav-aided mobile radio networks},
  author={Verdone, Roberto and Mignardi, Silvia},
  journal={IEEE Network},
  volume={32},
  number={5},
  pages={70--75},
  year={2018},
  publisher={IEEE}
}


@article{mozaffari2017wireless,
  title={Wireless communication using unmanned aerial vehicles (UAVs): Optimal transport theory for hover time optimization},
  author={Mozaffari, Mohammad and Saad, Walid and Bennis, Mehdi and Debbah, M{\'e}rouane},
  journal={IEEE Transactions on Wireless Communications},
  volume={16},
  number={12},
  pages={8052--8066},
  year={2017},
  publisher={IEEE}
}

@article{zeng2017energy,
  title={Energy-efficient UAV communication with trajectory optimization},
  author={Zeng, Yong and Zhang, Rui},
  journal={IEEE Transactions on Wireless Communications},
  volume={16},
  number={6},
  pages={3747--3760},
  year={2017},
  publisher={IEEE}
}

@article{guerrieroc2013energy,
  title={Energy efficient mobile target tracking using flying drones},
  author={Guerrieroc, Francesca},
  journal={Procedia Computer Science},
  volume={19},
  pages={80--87},
  year={2013}
}

@article{ceran2017optimal,
  title={Optimal energy allocation policies for a high altitude flying wireless access point},
  author={Ceran, Elif Tugce and Erkilic, Tugce and Uysal-Biyikoglu, Elif and Girici, Tolga and Leblebicioglu, Kemal},
  journal={Transactions on Emerging Telecommunications Technologies},
  volume={28},
  number={4},
  pages={e3034},
  year={2017},
  publisher={Wiley Online Library}
}

@inproceedings{mozaffari2016optimal,
  title={Optimal transport theory for power-efficient deployment of unmanned aerial vehicles},
  author={Mozaffari, Mohammad and Saad, Walid and Bennis, Mehdi and Debbah, Merouane},
  booktitle={2016 IEEE international conference on communications (ICC)},
  pages={1--6},
  year={2016},
  organization={IEEE}
}

@article{zhang2019energy,
  title={Energy-saving deployment algorithms of UAV swarm for sustainable wireless coverage},
  author={Zhang, Xiao and Duan, Lingjie},
  journal={arXiv preprint arXiv:1903.11221},
  year={2019}
}

@article{alzenad2018fso,
  title={FSO-based vertical backhaul/fronthaul framework for 5G+ wireless networks},
  author={Alzenad, Mohamed and Shakir, Muhammad Z and Yanikomeroglu, Halim and Alouini, Mohamed-Slim},
  journal={IEEE Communications Magazine},
  volume={56},
  number={1},
  pages={218--224},
  year={2018},
  publisher={IEEE}
}

@article{sharma2016uav,
  title={UAV-assisted heterogeneous networks for capacity enhancement},
  author={Sharma, Vishal and Bennis, Mehdi and Kumar, Rajesh},
  journal={IEEE Communications Letters},
  volume={20},
  number={6},
  pages={1207--1210},
  year={2016},
  publisher={IEEE}
}

@article{mozaffari2017optimal,
  title={Optimal transport theory for cell association in UAV-enabled cellular networks},
  author={Mozaffari, Mohammad and Saad, Walid and Bennis, Mehdi and Debbah, M{\'e}rouane},
  journal={IEEE Communications Letters},
  volume={21},
  number={9},
  pages={2053--2056},
  year={2017},
  publisher={IEEE}
}

@article{lagum2017strategic,
  title={Strategic densification with UAV-BSs in cellular networks},
  author={Lagum, Faraj and Bor-Yaliniz, Irem and Yanikomeroglu, Halim},
  journal={IEEE Wireless Communications Letters},
  volume={7},
  number={3},
  pages={384--387},
  year={2017},
  publisher={IEEE}
}

@inproceedings{galkin2018backhaul,
  title={Backhaul for low-altitude UAVs in urban environments},
  author={Galkin, Boris and Kibilda, Jacek and DaSilva, Luiz A},
  booktitle={2018 IEEE International Conference on Communications (ICC)},
  pages={1--6},
  year={2018},
  organization={IEEE}
}

@article{mozaffari2018beyond,
  title={Beyond 5G with UAVs: Foundations of a 3D wireless cellular network},
  author={Mozaffari, Mohammad and Kasgari, Ali Taleb Zadeh and Saad, Walid and Bennis, Mehdi and Debbah, M{\'e}rouane},
  journal={IEEE Transactions on Wireless Communications},
  volume={18},
  number={1},
  pages={357--372},
  year={2018},
  publisher={IEEE}
}

@article{gapeyenko2018flexible,
  title={Flexible and reliable UAV-assisted backhaul operation in 5G mmWave cellular networks},
  author={Gapeyenko, Margarita and Petrov, Vitaly and Moltchanov, Dmitri and Andreev, Sergey and Himayat, Nageen and Koucheryavy, Yevgeni},
  journal={IEEE Journal on Selected Areas in Communications},
  volume={36},
  number={11},
  pages={2486--2496},
  year={2018},
  publisher={IEEE}
}

@article{zeng2018cellular,
  title={Cellular-connected UAV: Potential, challenges, and promising technologies},
  author={Zeng, Yong and Lyu, Jiangbin and Zhang, Rui},
  journal={IEEE Wireless Communications},
  volume={26},
  number={1},
  pages={120--127},
  year={2018},
  publisher={IEEE}
}

@article{li2018uav,
  title={UAV communications for 5G and beyond: Recent advances and future trends},
  author={Li, Bin and Fei, Zesong and Zhang, Yan},
  journal={IEEE Internet of Things Journal},
  volume={6},
  number={2},
  pages={2241--2263},
  year={2018},
  publisher={IEEE}
}

@article{zeng2019accessing,
  title={Accessing from the sky: A tutorial on UAV communications for 5G and beyond},
  author={Zeng, Yongs and Wu, Qingqing and Zhang, Rui},
  journal={Proceedings of the IEEE},
  volume={107},
  number={12},
  pages={2327--2375},
  year={2019},
  publisher={IEEE}
}

@article{mishra2020survey,
  title={A Survey on Cellular-connected UAVs: Design Challenges, Enabling 5G/B5G Innovations, and Experimental Advancements},
  author={Mishra, Debashisha and Natalizio, Enrico},
  journal={arXiv preprint arXiv:2005.00781},
  year={2020}
}

@article{li2019secure,
  title={Secure UAV communication networks over 5G},
  author={Li, Bin and Fei, Zesong and Zhang, Yan and Guizani, Mohsen},
  journal={IEEE Wireless Communications},
  volume={26},
  number={5},
  pages={114--120},
  year={2019},
  publisher={IEEE}
}


@inproceedings{chen2009topology,
  title={Topology formation for wireless mesh network planning},
  author={Chen, C-c and Chekuri, Chandra and Klabjan, Diego},
  booktitle={IEEE INFOCOM 2009},
  pages={2671--2675},
  year={2009},
  organization={IEEE}
}

@inproceedings{xing2007resilient,
  title={On the resilient overlay topology formation in multi-hop wireless networks},
  author={Xing, Fei and Wang, Wenye},
  booktitle={International Conference on Research in Networking},
  pages={1--12},
  year={2007},
  organization={Springer}
}

@INPROCEEDINGS{7063479,
  author={Y. {Li} and N. {Pappas} and V. {Angelakis} and M. {Pióro} and D. {Yuan}},
  booktitle={2014 IEEE Globecom Workshops (GC Wkshps)}, 
  title={Resilient topology design for free space optical cellular backhaul networking}, 
  year={2014},
  volume={},
  number={},
  pages={487-492},}

@article{ahani2020age,
  title={Age-Optimal UAV Scheduling for Data Collection with Battery Recharging},
  author={Ahani, Ghafour and Yuan, Di and Zhao, Yixin},
  journal={arXiv preprint arXiv:2005.00252},
  year={2020}
}

@article{abd2018average,
  title={Average peak age-of-information minimization in UAV-assisted IoT networks},
  author={Abd-Elmagid, Mohamed A and Dhillon, Harpreet S},
  journal={IEEE Transactions on Vehicular Technology},
  volume={68},
  number={2},
  pages={2003--2008},
  year={2018},
  publisher={IEEE}
}



@inproceedings{shah2019bio,
  title={Bio-DRN: Robust and energy-efficient bio-inspired disaster response networks},
  author={Shah, Vijay K and Roy, Satyaki and Silvestri, Simone and Das, Sajal K},
  booktitle={2019 IEEE 16th International Conference on Mobile Ad Hoc and Sensor Systems (MASS)},
  pages={326--334},
  year={2019},
  organization={IEEE}
}

@inproceedings{shah2019towards,
  title={Towards energy-efficient and robust disaster response networks},
  author={Shah, Vijay K and Roy, Satyaki and Silvestri, Simone and Das, Sajal K},
  booktitle={Proceedings of the 20th International Conference on Distributed Computing and Networking},
  pages={397--400},
  year={2019}
}

@article{shah2020diverse,
  title={A Diverse Band-Aware Dynamic Spectrum Access Network Architecture for Delay-Tolerant Smart City Applications},
  author={Shah, Vijay K and Luciano, Brian and Silvestri, Simone and Bhattacharjee, Shameek and Das, Sajal K},
  journal={IEEE Transactions on Network and Service Management},
  volume={17},
  number={2},
  pages={1125--1139},
  year={2020},
  publisher={IEEE}
}



@INPROCEEDINGS{9129400,
  author={B. {Choudhury} and V. K. {Shah} and A. {Dayal} and J. H. {Reed}},
  booktitle={2020 IEEE 91st Vehicular Technology Conference (VTC2020-Spring)}, 
  title={Experimental Analysis of Safety Application Reliability in V2V Networks}, 
  year={2020},
  volume={},
  number={},
  pages={1-5},}

@article{roy2018design,
  title={Design of robust and efficient topology using enhanced gene regulatory networks},
  author={Roy, Satyaki and Shah, Vijay K and Das, Sajal K},
  journal={IEEE Transactions on Molecular, Biological and Multi-Scale Communications},
  volume={4},
  number={2},
  pages={73--87},
  year={2018},
  publisher={IEEE}
}

@inproceedings{roy2016characterization,
  title={Characterization of e. coli gene regulatory network and its topological enhancement by edge rewiring},
  author={Roy, Satyaki and Shah, Vijay and Das, Sajal},
  booktitle={Proceedings of the 9th EAI International Conference on Bio-inspired Information and Communications Technologies (formerly BIONETICS)},
  pages={391--398},
  year={2016}
}

@inproceedings{shah2018effective,
  title={An effective dynamic spectrum access based network architecture for smart cities},
  author={Shah, Vijay K and Silvestri, Simone and Bhattacharjee, Shameek and Das, Sajal K},
  booktitle={2018 IEEE International Smart Cities Conference (ISC2)},
  pages={1--7},
  year={2018},
  organization={IEEE}
}

@article{shah2018designing,
  title={Designing green communication systems for smart and connected communities via dynamic spectrum access},
  author={Shah, Vijay K and Bhattacharjee, Shameek and Silvestri, Simone and Das, Sajal K},
  journal={ACM Transactions on Sensor Networks (TOSN)},
  volume={14},
  number={3-4},
  pages={1--32},
  year={2018},
  publisher={ACM New York, NY, USA}
}

@webpage{CORNET,
	Author = {CORNET},
	Url = {https://cornet.wireless.vt.edu/},
	Year = {2020}
}
	
@webpage{wireless_vt,
	Author = {Wireless@VT},
	Url = {https://wireless.vt.edu/},
	Year = {2020}
}

@webpage{unreal_gaming_engine,
	Author = {Unreal Gaming Engine},
	Url = {https://www.unrealengine.com/en-US/},
	Year = {2020}
}

@webpage{CCI_5G_testbed,
	Author = {Commonwealth Cybersecurity Initiative 5G testbed},
	Url = {https://cyberinitiative.org/research/5g-wireless-security/5g-wireless-security-testbed/},
	Year = {2020}
	}	
	
@inproceedings{shah2017designing,
  title={Designing sustainable smart connected communities using dynamic spectrum access via band selection},
  author={Shah, Vijay K and Bhattacharjee, Shameek and Silvestri, Simone and Das, Sajal K},
  booktitle={Proceedings of the 4th ACM International Conference on Systems for Energy-Efficient Built Environments},
  pages={12},
  year={2017},
  organization={ACM}
}


@article{bahiense2012maximum,
  title={The maximum common edge subgraph problem: A polyhedral investigation},
  author={Bahiense, L. and Mani{\'c}, G. and Piva, B. and De Souza, C. C.},
  journal={Discrete Applied Mathematics},
  volume={160},
  number={18},
  pages={2523--2541},
  year={2012}
}

@article{gu2020topology,
  title={Topology Optimizing in FSO-based UAVs Relay Networks for Resilience Enhancement},
  author={Gu, Zhiqun and Zhang, Jiawei and Ji, Yuefeng},
  journal={Mobile Networks and Applications},
  volume={25},
  number={1},
  pages={350--362},
  year={2020},
  publisher={Springer}
}

@article{secinti2018sdns,
  title={SDNs in the sky: Robust end-to-end connectivity for aerial vehicular networks},
  author={Secinti, Gokhan and Darian, Parisa Borhani and Canberk, Berk and Chowdhury, Kaushik R},
  journal={IEEE Communications Magazine},
  volume={56},
  number={1},
  pages={16--21},
  year={2018},
  publisher={IEEE}
}

@inproceedings{liu2018age,
  title={Age-optimal trajectory planning for UAV-assisted data collection},
  author={Liu, Juan and Wang, Xijun and Bai, Bo and Dai, Huaiyu},
  booktitle={IEEE INFOCOM 2018-IEEE Conference on Computer Communications Workshops (INFOCOM WKSHPS)},
  pages={553--558},
  year={2018},
  organization={IEEE}
}

@inproceedings{abd2019deep,
  title={Deep reinforcement learning for minimizing age-of-information in UAV-assisted networks},
  author={Abd-Elmagid, Mohamed A and Ferdowsi, Aidin and Dhillon, Harpreet S and Saad, Walid},
  booktitle={2019 IEEE Global Communications Conference (GLOBECOM)},
  pages={1--6},
  year={2019},
  organization={IEEE}
}

@inproceedings{jia2019age,
  title={Age-based path planning and data acquisition in UAV-assisted IoT networks},
  author={Jia, Zekun and Qin, Xiaoqi and Wang, Zijing and Liu, Baoling},
  booktitle={2019 IEEE International Conference on Communications Workshops (ICC Workshops)},
  pages={1--6},
  year={2019},
  organization={IEEE}
}

@inproceedings{zhou2019deep,
  title={Deep RL-based trajectory planning for AoI minimization in UAV-assisted IoT},
  author={Zhou, Conghao and He, Hongli and Yang, Peng and Lyu, Feng and Wu, Wen and Cheng, Nan and Shen, Xuemin},
  booktitle={2019 11th International Conference on Wireless Communications and Signal Processing (WCSP)},
  pages={1--6},
  year={2019},
  organization={IEEE}
}

\begin{thebibliography}{10}
\providecommand{\url}[1]{#1}
\csname url@samestyle\endcsname
\providecommand{\newblock}{\relax}
\providecommand{\bibinfo}[2]{#2}
\providecommand{\BIBentrySTDinterwordspacing}{\spaceskip=0pt\relax}
\providecommand{\BIBentryALTinterwordstretchfactor}{4}
\providecommand{\BIBentryALTinterwordspacing}{\spaceskip=\fontdimen2\font plus
\BIBentryALTinterwordstretchfactor\fontdimen3\font minus
  \fontdimen4\font\relax}
\providecommand{\BIBforeignlanguage}[2]{{%
\expandafter\ifx\csname l@#1\endcsname\relax
\typeout{** WARNING: IEEEtran.bst: No hyphenation pattern has been}%
\typeout{** loaded for the language `#1'. Using the pattern for}%
\typeout{** the default language instead.}%
\else
\language=\csname l@#1\endcsname
\fi
#2}}
\providecommand{\BIBdecl}{\relax}
\BIBdecl

\bibitem{mozaffari2019tutorial}
M.~Mozaffari, W.~Saad, M.~Bennis, Y.-H. Nam, and M.~Debbah, ``A tutorial on
  uavs for wireless networks: Applications, challenges, and open problems,''
  \emph{IEEE communications surveys \& tutorials}, vol.~21, no.~3, pp.
  2334--2360, 2019.

\bibitem{s19092157}
\BIBentryALTinterwordspacing
A.~Ahmed, M.~Awais, T.~Akram, S.~Kulac, M.~Alhussein, and K.~Aurangzeb, ``Joint
  placement and device association of uav base stations in iot networks,''
  \emph{Sensors}, vol.~19, no.~9, 2019. [Online]. Available:
  \url{https://www.mdpi.com/1424-8220/19/9/2157}
\BIBentrySTDinterwordspacing

\bibitem{9422729}
K.~Hazra, V.~K. Shah, S.~Roy, S.~Deep, S.~Saha, and S.~Nandi, ``Exploring
  biological robustness for reliable multi-uav networks,'' \emph{IEEE
  Transactions on Network and Service Management}, vol.~18, no.~3, pp.
  2776--2788, 2021.

\bibitem{8422376}
B.~Galkin, J.~Kibilda, and L.~A. DaSilva, ``Backhaul for low-altitude uavs in
  urban environments,'' in \emph{2018 IEEE International Conference on
  Communications (ICC)}, 2018, pp. 1--6.

\bibitem{9364991}
N.~Iradukunda, Q.-V. Pham, M.~Zeng, H.-C. Kim, and W.-J. Hwang, ``Uav-enabled
  wireless backhaul networks using non-orthogonal multiple access,'' \emph{IEEE
  Access}, vol.~9, pp. 36\,689--36\,698, 2021.

\bibitem{9484638}
Y.~Dai, Y.~Guo, and J.~Hao, ``Uav placement and resource allocation for
  multi-hop uav assisted backhaul system,'' in \emph{IEEE INFOCOM 2021 - IEEE
  Conference on Computer Communications Workshops (INFOCOM WKSHPS)}, 2021, pp.
  1--6.

\bibitem{al2014optimal}
A.~Al-Hourani, S.~Kandeepan, and S.~Lardner, ``Optimal lap altitude for maximum
  coverage,'' \emph{IEEE Wireless Communications Letters}, vol.~3, no.~6, pp.
  569--572, 2014.

\bibitem{8642333}
C.-C. Lai, C.-T. Chen, and L.-C. Wang, ``On-demand density-aware uav base
  station 3d placement for arbitrarily distributed users with guaranteed data
  rates,'' \emph{IEEE Wireless Communications Letters}, vol.~8, no.~3, pp.
  913--916, 2019.

\bibitem{alzenad20173}
M.~Alzenad, A.~El-Keyi, F.~Lagum, and H.~Yanikomeroglu, ``3-d placement of an
  unmanned aerial vehicle base station (uav-bs) for energy-efficient maximal
  coverage,'' \emph{IEEE Wireless Communications Letters}, vol.~6, no.~4, pp.
  434--437, 2017.

\bibitem{matolak2012air}
D.~W. Matolak, ``Air-ground channels \& models: Comprehensive review and
  considerations for unmanned aircraft systems,'' in \emph{2012 IEEE Aerospace
  Conference}.\hskip 1em plus 0.5em minus 0.4em\relax IEEE, 2012, pp. 1--17.

\bibitem{al2014modeling}
A.~Al-Hourani, S.~Kandeepan, and A.~Jamalipour, ``Modeling air-to-ground path
  loss for low altitude platforms in urban environments,'' in \emph{IEEE global
  communications conference}.\hskip 1em plus 0.5em minus 0.4em\relax IEEE,
  2014, pp. 2898--2904.

\bibitem{matolak2016air}
D.~W. Matolak and R.~Sun, ``Air--ground channel characterization for unmanned
  aircraft systems—part i: Methods, measurements, and models for over-water
  settings,'' \emph{IEEE Transactions on Vehicular Technology}, vol.~66, no.~1,
  pp. 26--44, 2016.

\bibitem{yanmaz2013achieving}
E.~Yanmaz, R.~Kuschnig, and C.~Bettstetter, ``Achieving air-ground
  communications in 802.11 networks with three-dimensional aerial mobility,''
  in \emph{2013 Proceedings IEEE INFOCOM}.\hskip 1em plus 0.5em minus
  0.4em\relax IEEE, 2013, pp. 120--124.

\bibitem{mozaffari2017mobile}
M.~Mozaffari, W.~Saad, M.~Bennis, and M.~Debbah, ``Mobile unmanned aerial
  vehicles (uavs) for energy-efficient internet of things communications,''
  \emph{IEEE Transactions on Wireless Communications}, vol.~16, no.~11, pp.
  7574--7589, 2017.

\bibitem{bor2016efficient}
R.~I. Bor-Yaliniz, A.~El-Keyi, and H.~Yanikomeroglu, ``Efficient 3-d placement
  of an aerial base station in next generation cellular networks,'' in
  \emph{2016 IEEE Intl. Conf. on Comm. (ICC)}, May 2016.

\bibitem{mozaffari2016efficient}
M.~Mozaffari, W.~Saad, M.~Bennis, and M.~Debbah, ``Efficient deployment of
  multiple unmanned aerial vehicles for optimal wireless coverage,'' \emph{IEEE
  Communications Letters}, vol.~20, no.~8, pp. 1647--1650, 2016.

\bibitem{kalantari2016number}
E.~Kalantari, H.~Yanikomeroglu, and A.~Yongacoglu, ``On the number and 3d
  placement of drone base stations in wireless cellular networks,'' in
  \emph{2016 IEEE 84th Vehicular Technology Conference (VTC-Fall)}.\hskip 1em
  plus 0.5em minus 0.4em\relax IEEE, 2016, pp. 1--6.

\bibitem{wang2020placement}
W.~Wang, H.~Dai, C.~Dong, X.~Cheng, X.~Wang, P.~Yang, G.~Chen, and W.~Dou,
  ``Placement of unmanned aerial vehicles for directional coverage in 3d
  space,'' \emph{IEEE/ACM Transactions on Networking}, vol.~28, no.~2, pp.
  888--901, 2020.

\bibitem{you20193d}
C.~You and R.~Zhang, ``3d trajectory optimization in rician fading for
  uav-enabled data harvesting,'' \emph{IEEE Transactions on Wireless
  Communications}, vol.~18, no.~6, pp. 3192--3207, 2019.

\bibitem{cui2020adaptive}
J.~Cui, Z.~Ding, Y.~Deng, A.~Nallanathan, and L.~Hanzo, ``Adaptive
  uav-trajectory optimization under quality of service constraints: a
  model-free solution,'' \emph{IEEE Access}, pp. 1--14, 2020.

\bibitem{xiong2019task}
J.~Xiong, H.~Guo, and J.~Liu, ``Task offloading in uav-aided edge computing:
  Bit allocation and trajectory optimization,'' \emph{IEEE Communications
  Letters}, vol.~23, no.~3, pp. 538--541, 2019.

\bibitem{hu2019uav}
X.~Hu, K.-K. Wong, K.~Yang, and Z.~Zheng, ``Uav-assisted relaying and edge
  computing: Scheduling and trajectory optimization,'' \emph{IEEE Transactions
  on Wireless Communications}, vol.~18, no.~10, pp. 4738--4752, 2019.

\bibitem{huang2020trajectory}
D.~Huang, M.~Cui, G.~Zhang, X.~Chu, and F.~Lin, ``Trajectory optimization and
  resource allocation for uav base stations under in-band backhaul
  constraint,'' \emph{EURASIP Journal on Wireless Communications and
  Networking}, vol. 2020, no.~1, pp. 1--17, 2020.

\bibitem{9467352}
Y.-C. Kuo, J.-H. Chiu, J.-P. Sheu, and Y.-W.~P. Hong, ``Uav deployment and iot
  device association for energy-efficient data-gathering in fixed-wing
  multi-uav networks,'' \emph{IEEE Transactions on Green Communications and
  Networking}, pp. 1--1, 2021.

\bibitem{pawar2019performance}
P.~Pawar, S.~M. Yadav, and A.~Trivedi, ``Performance study of dual unmanned
  aerial vehicles with underlaid device-to-device communications,''
  \emph{Wireless Personal Communications}, vol. 105, no.~3, pp. 1111--1132,
  2019.

\bibitem{yi2019modeling}
W.~Yi, Y.~Liu, M.~Elkashlan, and A.~Nallanathan, ``Modeling and coverage
  analysis of downlink uav networks with mmwave communications,'' in \emph{2019
  IEEE International Conference on Communications Workshops (ICC
  Workshops)}.\hskip 1em plus 0.5em minus 0.4em\relax IEEE, 2019, pp. 1--6.

\bibitem{nguyen2020joint}
H.~Nguyen, H.~Tuan, T.~Duong, H.~Poor, and W.~Hwang, ``Joint d2d assignment,
  bandwidth and power allocation in cognitive uav-enabled networks,''
  \emph{IEEE Transactions on Cognitive Communications and Networking}, 2020.

\bibitem{B_Morteza_Stochastic_2020}
M.~Banagar, V.~V. Chetlur, and H.~S. Dhillon, \emph{Stochastic Geometry-Based
  Performance Analysis of Drone Cellular Networks}.\hskip 1em plus 0.5em minus
  0.4em\relax John Wiley \& Sons, Ltd, 2020, ch.~9, pp. 231--254.

\bibitem{9419071}
J.~Sabzehali, V.~K. Shah, H.~S. Dhillon, and J.~H. Reed, ``3d placement and
  orientation of mmwave-based uavs for guaranteed los coverage,'' \emph{IEEE
  Wireless Communications Letters}, vol.~10, no.~8, pp. 1662--1666, 2021.

\bibitem{mayor_deploying_2019}
\BIBentryALTinterwordspacing
V.~Mayor, R.~Estepa, A.~Estepa, and G.~Madinabeitia,
  ``\BIBforeignlanguage{en}{Deploying a {Reliable} {UAV}-{Aided}
  {Communication} {Service} in {Disaster} {Areas}},''
  \emph{\BIBforeignlanguage{en}{Wireless Communications and Mobile Computing}},
  vol. 2019, pp. 1--20, Apr. 2019. [Online]. Available:
  \url{https://www.hindawi.com/journals/wcmc/2019/7521513/}
\BIBentrySTDinterwordspacing

\bibitem{kalantari2017backhaul}
E.~Kalantari, M.~Z. Shakir, H.~Yanikomeroglu, and A.~Yongacoglu,
  ``Backhaul-aware robust 3d drone placement in 5g+ wireless networks,'' in
  \emph{2017 IEEE international conference on communications workshops (ICC
  workshops)}.\hskip 1em plus 0.5em minus 0.4em\relax IEEE, 2017, pp. 109--114.

\bibitem{9521913}
M.~Nafees, J.~Thompson, and M.~Safari, ``Multi-tier variable height uav
  networks: User coverage and throughput optimization,'' \emph{IEEE Access},
  vol.~9, pp. 119\,684--119\,699, 2021.

\bibitem{ansari2019soarnet}
N.~{Ansari}, Q.~{Fan}, X.~{Sun}, and L.~{Zhang}, ``Soarnet,'' \emph{IEEE
  Wireless Communications}, vol.~26, no.~6, pp. 37--43, 2019.

\bibitem{lyu2017placement}
J.~Lyu, Y.~Zeng, R.~Zhang, and T.~J. Lim, ``Placement optimization of
  {UAV}-mounted mobile base stations,'' \emph{IEEE Comm. Lett.}, vol.~21,
  no.~3, pp. 604--607, March 2017.

\bibitem{9462529}
N.~Nouri, J.~Abouei, A.~R. Sepasian, M.~Jaseemuddin, A.~Anpalagan, and K.~N.
  Plataniotis, ``3d multi-uav placement and resource allocation for
  energy-efficient iot communication,'' \emph{IEEE Internet of Things Journal},
  pp. 1--1, 2021.

\bibitem{9314048}
C.~Zhang, L.~Zhang, L.~Zhu, T.~Zhang, Z.~Xiao, and X.-G. Xia, ``3d deployment
  of multiple uav-mounted base stations for uav communications,'' \emph{IEEE
  Transactions on Communications}, vol.~69, no.~4, pp. 2473--2488, 2021.

\bibitem{9541037}
N.~Lin, Y.~Liu, L.~Zhao, D.~O. Wu, and Y.~Wang, ``An adaptive uav deployment
  scheme for emergency networking,'' \emph{IEEE Transactions on Wireless
  Communications}, vol.~21, no.~4, pp. 2383--2398, 2022.

\bibitem{9497328}
M.~Zhang, S.~Fu, and Q.~Fan, ``Joint 3d deployment and power allocation for
  uav-bs: A deep reinforcement learning approach,'' \emph{IEEE Wireless
  Communications Letters}, vol.~10, no.~10, pp. 2309--2312, 2021.

\bibitem{9378782}
Q.-V. Pham, N.~Iradukunda, N.~H. Tran, W.-J. Hwang, and S.-H. Chung, ``Joint
  placement, power control, and spectrum allocation for uav wireless backhaul
  networks,'' \emph{IEEE Networking Letters}, vol.~3, no.~2, pp. 56--60, 2021.

\bibitem{8482444}
T.~M. Nguyen, W.~Ajib, and C.~Assi, ``A novel cooperative noma for designing
  uav-assisted wireless backhaul networks,'' \emph{IEEE Journal on Selected
  Areas in Communications}, vol.~36, no.~11, pp. 2497--2507, 2018.

\bibitem{fotouhi2019survey}
A.~Fotouhi, H.~Qiang, M.~Ding, M.~Hassan, L.~G. Giordano, A.~Garcia-Rodriguez,
  and J.~Yuan, ``Survey on uav cellular communications: Practical aspects,
  standardization advancements, regulation, and security challenges,''
  \emph{IEEE Communications Surveys \& Tutorials}, vol.~21, no.~4, pp.
  3417--3442, 2019.

\bibitem{9530185}
N.~R.R, A.~Gupta, G.~Ghatak, A.~Srivastava, and V.~A. Bohara, ``Joint bandwidth
  and position optimization in uav networks deployed for disaster scenarios,''
  in \emph{2021 National Conference on Communications (NCC)}, 2021, pp. 1--6.

\bibitem{9509753}
B.~Hu, L.~Wang, S.~Chen, J.~Cui, and L.~Chen, ``An uplink throughput
  optimization scheme for uav-enabled urban emergency communications,''
  \emph{IEEE Internet of Things Journal}, pp. 1--1, 2021.

\bibitem{9448698}
G.~Santos, J.~Martins, A.~Coelho, H.~Fontes, M.~Ricardo, and R.~Campos, ``A
  fast gateway placement algorithm for flying networks,'' in \emph{2021 IEEE
  93rd Vehicular Technology Conference (VTC2021-Spring)}, 2021, pp. 1--6.

\bibitem{BRIMKOV20121039}
\BIBentryALTinterwordspacing
V.~E. Brimkov, A.~Leach, J.~Wu, and M.~Mastroianni, ``Approximation algorithms
  for a geometric set cover problem,'' \emph{Discrete Applied Mathematics},
  vol. 160, no.~7, pp. 1039 -- 1052, 2012. [Online]. Available:
  \url{http://www.sciencedirect.com/science/article/pii/S0166218X11004768}
\BIBentrySTDinterwordspacing

\bibitem{Butenko2004}
\BIBentryALTinterwordspacing
S.~Butenko, X.~Cheng, C.~A. Oliveira, and P.~M. Pardalos, \emph{A New Heuristic
  for the Minimum Connected Dominating Set Problem on Ad Hoc Wireless
  Networks}.\hskip 1em plus 0.5em minus 0.4em\relax Boston, MA: Springer US,
  2004, pp. 61--73. [Online]. Available:
  \url{https://doi.org/10.1007/978-1-4613-0219-3_4}
\BIBentrySTDinterwordspacing

\bibitem{alzoubi_distributed_2002}
\BIBentryALTinterwordspacing
K.~M. Alzoubi, P.-J. Wan, and O.~Frieder, ``Distributed heuristics for
  connected dominating sets in wireless ad hoc networks,'' \emph{Journal of
  Communications and Networks}, vol.~4, no.~1, pp. 22--29, Mar. 2002. [Online].
  Available: \url{http://ieeexplore.ieee.org/document/6596929/}
\BIBentrySTDinterwordspacing

\bibitem{shah2017ctr}
V.~K. Shah, S.~Roy, S.~Silvestri, and S.~K. Das, ``Ctr: Cluster based
  topological routing for disaster response networks,'' in \emph{IEEE
  International Conference on Communications (ICC)}.\hskip 1em plus 0.5em minus
  0.4em\relax IEEE, 2017, pp. 1--6.

\bibitem{shah2019x}
V.~K. Shah, S.~Silvestri, B.~Luciano, and S.~K. Das, ``X-chant: A diverse dsa
  based architecture for next-generation challenged networks,'' in \emph{IEEE
  INFOCOM 2019-IEEE Conference on Computer Communications}.\hskip 1em plus
  0.5em minus 0.4em\relax IEEE, 2019, pp. 586--594.

\bibitem{zhang20183D}
L.~Zhang, Q.~Fan, and N.~Ansari, ``3-d drone-base-station placement with
  in-band full-duplex communications,'' \emph{IEEE Comm. Lett.}, vol.~22,
  no.~9, pp. 1902--1905, Sept. 2018.

\bibitem{fan2019towards}
Q.~{Fan} and N.~{Ansari}, ``Towards traffic load balancing in drone-assisted
  communications for iot,'' \emph{IEEE Internet of Things Journal}, vol.~6,
  no.~2, pp. 3633--3640, 2019.

\end{thebibliography}


@ARTICLE{9497328,  author={Zhang, Meng and Fu, Shu and Fan, Qilin},  journal={IEEE Wireless Communications Letters},   title={Joint 3D Deployment and Power Allocation for UAV-BS: A Deep Reinforcement Learning Approach},   year={2021},  volume={10},  number={10},  pages={2309-2312},  doi={10.1109/LWC.2021.3100388}}

@ARTICLE{9541037,  author={Lin, Na and Liu, Yuheng and Zhao, Liang and Wu, Dapeng Oliver and Wang, Yifan},  journal={IEEE Transactions on Wireless Communications},   title={An Adaptive UAV Deployment Scheme for Emergency Networking},   year={2022},  volume={21},  number={4},  pages={2383-2398},  doi={10.1109/TWC.2021.3111991}}

@ARTICLE{9314048,  author={Zhang, Chen and Zhang, Leyi and Zhu, Lipeng and Zhang, Tao and Xiao, Zhenyu and Xia, Xiang-Gen},  journal={IEEE Transactions on Communications},   title={3D Deployment of Multiple UAV-Mounted Base Stations for UAV Communications},   year={2021},  volume={69},  number={4},  pages={2473-2488},  doi={10.1109/TCOMM.2021.3049387}}

@INPROCEEDINGS{8846947,  author={B. {Perabathini} and K. {Tummuri} and A. {Agrawal} and V. S. {Varma}},  booktitle={2019 28th International Conference on Computer Communication and Networks (ICCCN)},   title={Efficient 3D Placement of UAVs with QoS Assurance in Ad Hoc Wireless Networks},   year={2019},  volume={},  number={},  pages={1-6},}

@ARTICLE{8660516,  author={M. {Mozaffari} and W. {Saad} and M. {Bennis} and Y. {Nam} and M. {Debbah}},  journal={IEEE Communications Surveys   Tutorials},   title={A Tutorial on UAVs for Wireless Networks: Applications, Challenges, and Open Problems},   year={2019},  volume={21},  number={3},  pages={2334-2360},}

@INPROCEEDINGS{7962642,  author={E. {Kalantari} and M. Z. {Shakir} and H. {Yanikomeroglu} and A. {Yongacoglu}},  booktitle={2017 IEEE International Conference on Communications Workshops (ICC Workshops)},   title={Backhaul-aware robust 3D drone placement in 5G+ wireless networks},   year={2017},  volume={},  number={},  pages={109-114},}


@article{BRIMKOV20121039,
title = "Approximation algorithms for a geometric set cover problem",
journal = "Discrete Applied Mathematics",
volume = "160",
number = "7",
pages = "1039 - 1052",
year = "2012",
issn = "0166-218X",
doi = "https://doi.org/10.1016/j.dam.2011.11.023",
url = "http://www.sciencedirect.com/science/article/pii/S0166218X11004768",
author = "Valentin E. Brimkov and Andrew Leach and Jimmy Wu and Michael Mastroianni",
keywords = "Guarding set of segments, Art-gallery problem, Approximation algorithm, Set cover, Vertex cover",
abstract = "Given a finite set of straight line segments S in R2 and some k∈N, is there a subset V of points on segments in S with |V|≤k such that each segment of S contains at least one point in V? This is a special case of the set covering problem where the family of subsets given can be taken as a set of intersections of the straight line segments in S. Requiring that the given subsets can be interpreted geometrically this way is a major restriction on the input, yet we have shown that the problem is still strongly NP-complete [5]. In light of this result, we studied the accuracy of two polynomial-time approximation algorithms along with a third heuristic based algorithm which return segment coverings. We obtain certain theoretical results, and in particular we show that the performance ratio for each of these algorithms is unbounded, in general. Despite this, our experimental results suggest that the cases where this ratio exceeds 2 are rare for “reasonable” methods of placing segments. This is evidence that these polynomial-time algorithms solve the problem accurately in practice."
}

@ARTICLE{ansari2019soar,
  author={N. {Ansari} and Q. {Fan} and X. {Sun} and L. {Zhang}},
  journal={IEEE Wireless Communications}, 
  title={SoarNet}, 
  year={2019},
  volume={26},
  number={6},
  pages={37-43},}

@ARTICLE{Zhang2018analysis,  author={S. {Zhang} and J. {Liu}},  journal={IEEE Transactions on Vehicular Technology},  title={Analysis and Optimization of Multiple Unmanned Aerial Vehicle-Assisted Communications in Post-Disaster Areas},   year={2018},  volume={67},  number={12},  pages={12049-12060},}

@INPROCEEDINGS{Ferranti2019hiro,  author={L. {Ferranti} and S. {D'Oro} and L. {Bonati} and E. {Demirors} and F. {Cuomo} and T. {Melodia}},  booktitle={2019 IEEE 20th International Symposium on "A World of Wireless, Mobile and Multimedia Networks" (WoWMoM)},  title={HIRO-NET: Self-Organized Robotic Mesh Networking for Internet Sharing in Disaster Scenarios},   year={2019},  volume={},  number={},  pages={1-9},}

@ARTICLE{ansari2019soarnet,  author={N. {Ansari} and Q. {Fan} and X. {Sun} and L. {Zhang}},  journal={IEEE Wireless Communications},  title={SoarNet},   year={2019},  volume={26},  number={6},  pages={37-43},}

@ARTICLE{fan2019towards,  author={Q. {Fan} and N. {Ansari}},  journal={IEEE Internet of Things Journal},  title={Towards Traffic Load Balancing in Drone-Assisted Communications for IoT},   year={2019},  volume={6},  number={2},  pages={3633-3640},}

@ARTICLE{alzenad20173d, 
author={M. Alzenad and A. El-Keyi and F. Lagum and H. Yanikomeroglu}, 
journal={IEEE Wireless Comm. Lett.}, 
title={{3-D Placement of an Unmanned Aerial Vehicle Base Station (UAV-BS) for Energy-Efficient Maximal Coverage}}, 
year={2017}, 
volume={6}, 
number={4}, 
pages={434-437}, 
month={Aug},}

@ARTICLE{zhang20183D, 
author={L. Zhang and Q. Fan and N. Ansari}, 
journal={IEEE Comm. Lett.}, 
title={3-D Drone-Base-Station Placement with In-Band Full-Duplex Communications}, 
year={2018}, 
volume={22}, 
number={9}, 
pages={1902-1905},
month={Sept.}   }


@ARTICLE{zeng2016wireless, 
author={Y. Zeng and R. Zhang and T. J. Lim}, 
journal={IEEE Comm. Mag.}, 
title={Wireless communications with unmanned aerial vehicles: opportunities and challenges}, 
year={2016}, 
volume={54}, 
number={5}, 
pages={36-42},  
month={May},}

@ARTICLE{wang2018energy, 
author={L. Wang and B. Hu and S. Chen}, 
journal={IEEE Wireless Comm. Lett.}, 
title={Energy Efficient Placement of a Drone Base Station for Minimum Required Transmit Power}, 
year={doi:10.1109/LWC.2018.2808957, 2018}, 
}

@ARTICLE{shi2018multiple, 
author={W. {Shi, et al.}}, 
journal={IEEE Access}, 
title={Multiple Drone-Cell Deployment Analyses and Optimization in Drone Assisted Radio Access Networks}, 
year={2018}, 
volume={6}, 
number={}, 
pages={12518-12529}, 
}

@ARTICLE{sapountzis2017user, 
author={N. Sapountzis and T. Spyropoulos and N. Nikaein and U. Salim}, 
journal={IEEE/ACM Trans. on Netw}, 
title={User Association in HetNets: Impact of Traffic Differentiation and Backhaul Limitations}, 
year={2017}, 
volume={25}, 
number={6}, 
pages={3396-3410}, 
month={Dec},}

@ARTICLE{lyu2017placement, 
author={J. Lyu and Y. Zeng and R. Zhang and T. J. Lim}, 
journal={IEEE Comm. Lett.}, 
title={Placement Optimization of {UAV}-Mounted Mobile Base Stations}, 
year={2017}, 
volume={21}, 
number={3}, 
pages={604-607},  
month={March},}


@ARTICLE{alzenad20183d, 
author={M. Alzenad and A. El-Keyi and H. Yanikomeroglu}, 
journal={IEEE Wireless Communications Lett.}, 
title={3-D Placement of an Unmanned Aerial Vehicle Base Station for Maximum Coverage of Users With Different QoS Requirements}, 
year={2018}, 
volume={7}, 
number={1}, 
pages={38-41}, 
month={Feb},}

@ARTICLE{fan2018application, 
author={Q. Fan and N. Ansari}, 
journal={IEEE Internet of Things Journal}, 
title={Application Aware Workload Allocation for Edge Computing based {IoT}}, 
year={2018}, 
volume={5}, 
number={3}, 
pages={2146-2153}, 
month={June}
}

@ARTICLE{bai2015performance, 
author={Z. {Bai \textit{et al.}} }, 
journal={IEEE Trans. on Comm.}, 
title={Performance Analysis of SNR-Based Incremental Hybrid Decode-Amplify-Forward Cooperative Relaying Protocol}, 
year={2015}, 
volume={63}, 
number={6}, 
pages={2094-2106}, 
month={June},}

@book{liu2009cooperative,
  title={Cooperative communications and networking},
  author={{Liu \textit{et al.}}, KJ Ray},
  year={2009},
  publisher={{Cambridge University Press}}
}
@article{fan2017energy,
  title={Energy Driven Avatar Migration in Green Cloudlet Networks},
  author={Fan, Qiang and Ansari, Nirwan and Sun, Xiang},
  journal={IEEE Comm. Lett.},
  year={2017},
volume={21},
number={7},
pages={1601-1604}
}

@ARTICLE{fan2018workload, 
author={Q. Fan and N. Ansari}, 
journal={IEEE Comm. Lett.}, 
title={Workload Allocation in Hierarchical Cloudlet Networks}, 
year={2018}, 
volume={22}, 
number={4}, 
pages={820-823}, 
month={Apr.},}

@ARTICLE{alzenad2018fso, 
author={M. Alzenad and M. Z. Shakir and H. Yanikomeroglu and M. S. Alouini}, 
journal={IEEE Communications Magazine}, 
title={FSO-Based Vertical Backhaul/Fronthaul Framework for 5G+ Wireless Networks}, 
year={2018}, 
volume={56}, 
number={1}, 
pages={218-224}, 
month={Jan},}

@ARTICLE{Al-Hourani2014optimal, 
author={A. Al-Hourani and S. Kandeepan and S. Lardner}, 
journal={IEEE Wireless Comm. Lett.}, 
title={Optimal {LAP} Altitude for Maximum Coverage}, 
year={2014}, 
volume={3}, 
number={6}, 
pages={569-572}, 
month={Dec},}

@INPROCEEDINGS{fotouhi2017dynamic, 
author={A. Fotouhi and M. Ding and M. Hassan}, 
booktitle={2017 IEEE 18th International Symposium on A World of Wireless, Mobile and Multimedia Networks (WoWMoM)}, 
title={Dynamic base station repositioning to improve spectral efficiency of drone small cells}, 
year={2017}, 
month={June},}

@INPROCEEDINGS{sun2017latency, 
author={X. Sun and N. Ansari}, 
booktitle={GLOBECOM 2017 - 2017 IEEE Global Comm. Conf.}, 
title={Latency Aware Drone Base Station Placement in Heterogeneous Networks}, 
year={2017},  
month={Dec},}


@INPROCEEDINGS{bor2016efficient, 
author={R. I. Bor-Yaliniz and A. El-Keyi and H. Yanikomeroglu}, 
booktitle={2016 IEEE Intl. Conf. on Comm. (ICC)}, 
title={Efficient 3-D placement of an aerial base station in next generation cellular networks}, 
year={2016}, 
month={May},}

@article{ansari2018mobile,
  title={Mobile edge computing empowers Internet of Things},
  author={Ansari, Nirwan and Sun, Xiang},
  journal={IEICE Trans. on Comm.},
 volume={E101-B},
  number={3},
  pages={604--619},
  year={2018},
month={Mar.}
}

@inproceedings{landa2013large,
  title={The large-scale geography of {Internet} round trip times},
  author={{Landa, et al.}, Ra},
  booktitle={IFIP Networking Conference},
  pages={1--9},
  year={2013},
  month ={May},
  address={Brooklyn, NY},
}
@inproceedings{wang2014improving,
  title={Improving utilization through dynamic VM resource allocation in hybrid cloud environment},
  author={Wang, Yuda and Yang, Renyu and Wo, Tianyu and Jiang, Wenbo and Hu, Chunming},
  booktitle={20th IEEE International Conference on Parallel and Distributed Systems (ICPADS 2014) },
  pages={241--248},
  year={2014},
}
@article{wang2015processing,
  title={Processing distributed internet of things data in clouds},
  author={Wang, Lizhe and Ranjan, Rajiv},
  journal={IEEE Cloud Computing},
  volume={2},
  number={1},
  pages={76--80},
  year={2015},
}

@article{truong2015principles,
  title={Principles for engineering IoT cloud systems},
  author={Truong, Hong-Linh and Dustdar, Schahram},
  journal={IEEE Cloud Computing},
  volume={2},
  number={2},
  pages={68--76},
  year={2015},
}

@article{yang2016cost,
  title={Cost aware service placement and load dispatching in mobile cloud systems},
  author={Yang, Lei and Cao, Jiannong and Liang, Guanqing and Han, Xu},
  journal={IEEE Transactions on Computers},
  volume={65},
  number={5},
  pages={1440--1452},
  year={2016},
}

@ARTICLE{sun2017latency, 
author={X. Sun and N. Ansari}, 
journal={IEEE Communications Lett.}, 
title={Latency Aware Workload Offloading in the Cloudlet Network}, 
year={2017}, 
volume={21}, 
number={7}, 
pages={1481-1484}, 
month={July},}

@article{jutila2016adaptive,
  title={An Adaptive Edge Router Enabling Internet of Things},
  author={Jutila, Mirjami},
  journal={IEEE Internet of Things Journal},
  volume={3},
  number={6},
  pages={1061--1069},
  year={2016},
}

@article{chiang2016fog,
  title={Fog and IoT: An overview of research opportunities},
  author={Chiang, Mung and Zhang, Tao},
  journal={IEEE Internet of Things Journal},
  volume={3},
  number={6},
  pages={854--864},
  year={2016},
}


@INPROCEEDINGS{fan2017cost, 
author={Q. Fan and N. Ansari}, 
booktitle={2017 IEEE International Conference on Communications (ICC)}, 
title={Cost Aware cloudlet Placement for big data processing at the edge}, 
year={2017}, 
pages={1-6}, 
address={Paris, France},
month={May},}

@book{boyd2004convex,
  title={Convex optimization},
  author={Boyd, Stephen and Vandenberghe, Lieven},
  year={2009},
  publisher={Cambridge university press},
}

@article{cai2014cloudlet,
  title={A cloudlet-assisted multiplayer cloud gaming system},
  author={Cai, Wei and Leung, Victor CM and Hu, Long},
  journal={Mobile Networks and Applications},
  volume={19},
  number={2},
  pages={144--152},
  year={2014},
}

@INPROCEEDINGS{tong2016cloud, 
author={L. Tong and Y. Li and W. Gao}, 
booktitle={35th Annual IEEE Intl. Conf. on Comp. Comm. (INFOCOM 2016)}, 
title={A hierarchical edge cloud architecture for mobile computing}, 
year={2016}, 
pages={1-9}, 
month={April},
address={San Francisco, CA}}

@ARTICLE{sun2017adaptive, 
author={X. Sun and N. Ansari}, 
journal={IEEE Trans. on Cloud Computing}, 
title={Avaptive Avatar Handoff in the Cloudlet Network}, 
year={2017, DOI:10.1109/TCC.2017.2701794, early access}, 
doi={DOI:10.1109/TCC.2017.2701794}, 
}

@ARTICLE{kiani2017towards, 
author={A. Kiani and N. Ansari}, 
journal={IEEE Internet of Things Journal}, 
title={Towards Hierarchical Mobile Edge Computing: An Auction-Based Profit Maximization Approach}, 
year={2017, DOI: 10.1109/JIOT.2017.2750030, early access}, 
doi={DOI: 10.1109/JIOT.2017.2750030}, 
}

@inproceedings{fan2017cost,
  title={Cost Aware cloudlet Placement for big data processing at the edge},
  author={Fan, Qiang and Ansari, Nirwan},
  booktitle={IEEE International Conference on Communications (ICC' 2017)},
  pages={1--6},
  year={2017},
  month={May 21-25},
  address={Paris, France}
 
}
@article{gu2017cost,
  title={Cost Efficient Resource Management in Fog Computing Supported Medical Cyber-Physical System},
  author={Gu, Lin and Zeng, Deze and Guo, Song and  Barnawi, Ahmed and Xiang, Yong},
  journal={IEEE Transactions on Emerging Topics in Computing},
  volume={5},
  number={1},
  pages={108--119},
  year={2017}
}

@article{quwaider2015cloudlet,
  title={Cloudlet-based efficient data collection in wireless body area networks},
  author={Quwaider, Muhannad and Jararweh, Yaser},
  journal={Simulation Modelling Practice and Theory},
  volume={50},
  pages={57--71},
  year={2015}
}

@article{satyanarayanan2009case,
  title={The case for vm-based cloudlets in mobile computing},
  author={Satyanarayanan, Mahadev and Bahl, Paramvir and Caceres, Ram{\'o}n and Davies, Nigel},
  journal={IEEE pervasive Computing},
  volume={8},
  number={4},
  year={2009}
}

@article{zhang2015offloading,
  title={Offloading in mobile cloudlet systems with intermittent connectivity},
  author={Zhang, Yang and Niyato, Dusit and Wang, Ping},
  journal={IEEE Transactions on Mobile Computing},
  volume={14},
  number={12},
  pages={2516--2529},
  year={2015}
}
@article{xu2016efficient,
  title={Efficient algorithms for capacitated cloudlet placements},
  author={Xu, Zichuan and Liang, Weifa and Xu, Wenzheng and Jia, Mike and Guo, Song},
  journal={IEEE Transactions on Parallel and Distributed Systems},
  volume={27},
  number={10},
  pages={2866--2880},
  year={2016},
  publisher={IEEE}
}

@article{sun2016edgeiot,
  title={Edgeiot: Mobile edge computing for the internet of things},
  author={Sun, Xiang and Ansari, Nirwan},
  journal={IEEE Communications Magazine},
  volume={54},
  number={12},
  pages={22--29},
  year={2016},
}
@inproceedings{jin2013softcell,
  title={Softcell: Scalable and flexible cellular core network architecture},
  author={Jin, Xin and Li, Li Erran and Vanbever, Laurent and Rexford, Jennifer},
  booktitle={Proceedings of the ninth ACM conference on Emerging networking experiments and technologies},
  pages={163--174},
  year={2013},
  month={Dec.09-12},
  address={Santa Barbara, CA}
}

@article{yang2016cost,
  title={Cost aware service placement and load dispatching in mobile cloud systems},
  author={Yang, Lei and Cao, Jiannong and Liang, Guanqing and Han, Xu},
  journal={IEEE Transactions on Computers},
  volume={65},
  number={5},
  pages={1440--1452},
  year={2016},
}

@article{goonatilake2012modeling,
  title={Modeling latency in a network distribution},
  author={Goonatilake, Rohitha and Bachnak, Rafic A},
  journal={Network and Communication Technologies},
  volume={1},
  number={2},
  pages={1--11},
  year={2012}
}
@inproceedings{landa2013large,
  title={The large-scale geography of internet round trip times},
  author={Landa {\textit{et al.}}, Ra},
  booktitle={IFIP Networking Conference},
  pages={1--9},
  address={Brooklyn, NY},
  year={2013},
  month={May 22-24},
}
@article{wu2006capacitated,
  title={Capacitated facility location problem with general setup cost},
  author={Wu, Ling Yun and Zhang, Xiang Sun and Zhang, Ju Liang},
  journal={Computers \& Operations Research},
  volume={33},
  number={5},
  pages={1226--1241},
  year={2006}
}
@article{fisher1981lagrangian,
  title={The Lagrangian relaxation method for solving integer programming problems},
  author={Fisher, Marshall L},
  journal={Management science},
  volume={27},
  number={1},
  pages={1--18},
  year={1981}
}

@article{sun2017adaptive, 
author={X. Sun and N. Ansari}, 
journal={IEEE Transactions on Cloud Computing}, 
title={Avaptive Avatar Handoff in the Cloudlet Network}, 
year={2017}, 
volume={DOI:10.1109/TCC.2017.2701794, early access}
} 

@article{fan2017energy,
  title={Energy Driven Avatar Migration in Green Cloudlet Networks},
  author={Fan, Qiang and Ansari, Nirwan and Sun, Xiang},
  journal={IEEE Communications Lett.},
  year={2017},
volume={21},
number={7},
pages={1601-1604}
}


@article{ghiani2002lagrangean,
  title={A Lagrangean heuristic for the plant location problem with multiple facilities in the same site},
  author={Ghiani, G and Grandinetti, L and Guerriero, F and Musmanno, R},
  journal={Optimization methods and software},
  volume={17},
  number={6},
  pages={1059--1076},
  year={2002}
  }

@article{cornuejols1991comparison,
  title={A comparison of heuristics and relaxations for the capacitated plant location problem},
  author={Cornujelos, Gerard and Sridharan, Ranjani and Thizy, Jean-Michel},
  journal={European journal of operational research},
  volume={50},
  number={3},
  pages={280--297},
  year={1991}
 }

@inproceedings{ghosh2005sociological,
  title={Sociological orbit aware location approximation and routing in manet},
  author={Ghosh, Joy and Philip, Sumesh J and Qiao, Chunming},
  booktitle={2nd International Conference on Broadband Networks},
  pages={641--650},
address={Boston, MA},
month={Oct.},
  year={2005},
  
}

@article{ghosh2005profiling,
  title={On profiling mobility and predicting locations of campus-wide wireless network users},
  author={Ghosh, Joy and Beal, Matthew J and Ngo, Hung Q and Qiao, Chunming},
  journal={State University of New York at Buffalo, Tech. Rep. CSE-2005-27},
url={http://www.cse.buffalo.edu/tech-reports/2005-27.pdf},
  year={2005}
}

@article{Kiani2015towards,

author = {
Kiani, A. and Ansari Nirwan
},

journal = {IEEE Communications Lett.},


title = {Towards Low-Cost Workload Distribution for Integrated Green Data Centers},
volume = {19},
number = {1},
pages = {26-29},

month = {Jan.},
year = {2015},
}

@article{Zhang2013on,

author = {
Zhang, Y. and Ansari, Nirwan},

journal = {IEEE Communications Surveys and Tutorials},


title = {On Architecture Design, Congestion Notification, TCP Incast and Power Consumption in Data Centers},
volume = {15},
number = {1},
pages = {39-64},

month = {First Quarter,},
year = {2013},
}

@article{Sun2016green,

author = {
Sun, Xiang and Ansari, Nirwan},

journal = {IEEE Network},


title = {Green Cloudlet Network: A Distributed Green Mobile Cloud Network in Future Networks},

year = {to appear},
}

@article{coroama2015the,

author = {
Coroama, Vlad C. and Schien, Daniel
and Preist, Chris and Hilty, Lorenz M.
},

journal = {Advances in Intelligent Systems and Computing},


title = {The Energy Intensity of the Internet: Home and Access Networks},
volume = {310},
pages = {157-170},
month = {Aug.},
year = {2015},
}

@inproceedings{goiri2012green,
	title={GreenHadoop: leveraging green energy in data-processing frameworks},
	
	author = {Goiri, \textit{et al.}, {\'{I}}{\~{n}}igo },
	
	booktitle={Proceedings of the 7th ACM european conference on Computer Systems},
	
	address={Bern, Switzerland},
	year={2012},
	month={Apr.}
}


@ARTICLE{jia2015optimal, 
author={M. Jia and J. Cao and W. Liang}, 
journal={IEEE Transactions on Cloud Computing}, 
title={Optimal Cloudlet Placement and User to Cloudlet Allocation in Wireless Metropolitan Area Networks}, 
year={2017}, 
volume={5}, 
number={4}, 
pages={725-737}, 
month={Oct},}


@article{Satyanarayanan2009the,

author = {Satyanarayanan, M and Bahl,P. and Caceres, R. and Davies,  N.},

journal = {Pervasive Computing, IEEE},


title = {The case for vm-based cloudlets in mobile computing},
volume = {8},
number = {4},
pages = {14-23},
year = {2009},
}

@article{Loiola2007a ,

author = {Loiola, \textit{et al.}, Eliane Maria},

journal = {European journal of operational research},


title = {A survey for the quadratic assignment problem},
volume = {176},
number = {2},
pages = {657-690},
year = {2007},

}



@inproceedings{Pelley2009understanding,
	title={Understanding and Abstracting Total Data Center Power},
	
	author={Pelley, Steven and
Meisner, David and
Wenisch, Thomas F and
VanGilder, James W},
	
	booktitle={ Workshop on Energy Efficient Design (WEED 2009)},
	
	address={Austin, TX},
	year={2009},
	month={June}
}

@article{Ghamkhari2013energy ,

author = {Ghamkhari, M and Mohsenian-Rad, H},

journal = {IEEE Transactions on Smart Grid},

number = {2},
pages = {1017-1025},
title = {Energy and Performance Management of Green Data Centers: A Profit Maximization Approach},
volume = {4},
year = {2013},
month={June}
}

@inproceedings{3gpp,
author = {{3GPP} },

booktitle = {3GPP Technical Report},


title = {{3GPP Technical Report 36.828 version 11.0.0, release 11:
3rd generation partnership project; further enhancements to LTE time
division duplex (TDD) for downlink-uplink (DL-UL) interference management and traffic adaptation}},

year = {2012},
}


@article{Gkatzikis2013migrate,

author = {Gkatzikis, Lazaros and Koutsopoulos, Iordanis },

journal = {IEEE Wireless Communications},

number = {3},
pages = {24-32},
title = {{Migrate or not? exploiting dynamic task migration in mobile cloud computing systems}},
volume = {20},
year = {2013},
month={June}
}



@inproceedings{Giacobbe2015an,
	title={An approach to reduce carbon dioxide emissions through virtual machine migrations in a sustainable cloud federation},
	
	author={Giacobbe, \textit{et al.}, Maurizio},
	
	booktitle={Sustainable Internet and ICT for Sustainability (SustainIT' 2015)},
	
	address={Madrid, Spain},
	year={2015},
	month={Apr.}
}


@inproceedings{Rao2010minimizing,
	title={Minimizing electricity cost: optimization of distributed internet data centers in a multi-electricity-market environment},
	
	author={Rao, Lei and Liu, Xue and Xie, Le and Liu, Wenyu},
	
	booktitle={Proceedings of IEEE INFOCOM},
	
	address={San Diego, CA},
	year={2010},
	month={Mar.}
}

@article{Taleb2014follow,

author = {Taleb, Tarik and
Ksentini, Adlen},

journal = {IEEE Network},

number = {5},
pages = {12-19},
title = {{Follow Me cloud: Interworking federated clouds and distributed mobile networks}},
volume = {27},
year = {2013},
month={Sep.}
}
@article{han2016traffic,
  title={A traffic load balancing framework for software-defined radio access networks powered by hybrid energy sources},
  author={Han, Tao and Ansari, Nirwan},
  journal={IEEE/ACM Trans. on Networking},
  volume={24},
  number={2},
  pages={1038--1051},
  year={2016}
}

@misc{climate,
author={ {National Climatic Data Center} },
title = {{National Climatic Data Center Daily solar radiation data trace}},
url={http://www.ncdc.noaa.gov/pub/data/uscrn /products/hourly02/2015/CRNH0203-2015-NY\_Millbrook\_3\_W.txt}
}

@misc{Evolution,

title = {{Evolution of land mobile radio (including personal) ccommunications: Cost 231.}},

url={http://www.awe-communications.com/ Propagation/Urban/COST/}
}


@misc{CiscoVisual,
Publisher={ Cisco },
title = {{Cisco Visual Networking Index: Forecast and Methodology, 2014-2019 White Paper}},
url={http://www.cisco.com/c/en/us/solutions/collateral/service-provider/ip-ngn-ip-next-generation-network/white\_paper\_c11-481360.html}
}


@article{Han2014powering,

author = {Han, Tao and Ansari, Nirwan},

journal = {IEEE Wireless Communications Magazine},

number = {1},
pages = {90-96},
title = {{Powering Mobile Networks with Green Energy}},
volume = {21},
year = {2014},
month={Feb.}
}

@article{Han2014offloading,

author = {Han, Tao and Ansari, Nirwan},

journal = { IEEE Internet of Things Journal},

number = {2},
pages = {161-170},
title = {{Offloading Mobile Traffic via Green Content Broker}},
volume = {1},
year = {2014},
month={Apr.}
}

@article{Han2015provisioning,

author = {Han, Tao and Ansari, Nirwan},

journal = {IEEE Trans. on Vehicular Technology},

title = {{Provisioning Green Energy for Base Stations in Heterogeneous Networkss}},

year = {2015},
month={DOI: 10.1109/TVT.2015.2466101}
}

@article{Auer2011how,
author = {Auer, Gunther and Giannini, Vito and Desset, Claude and G\'{o}dor, Istv\'{a}n and Skillermark, Per and Olsson, Magnus and Imran, Muhammad Ali and Sabella, Dario and Gonzalez, Manuel J. and Blume, Oliver and Fehske, Albrecht},
doi = {10.1109/MWC.2011.6056691},
file = {:E$\backslash$:/Papers/Mobile networks/How much energy is needed to run a wireless network.pdf:pdf},
isbn = {1536-1284},
issn = {15361284},
journal = {IEEE Wireless Communications},
number = {5},
pages = {40--49},
title = {{How much energy is needed to run a wireless network?}},
volume = {18},
year = {2011}
}

@article{Han2015a,
	
	author = {Han, Tao and Ansari, Nirwan},
	
	journal = {IEEE/ACM Transactions on Networking},
	
	title = {{A Traffic Load Balancing Framework for Software-Defined Radio Access Networks Powered by Hybrid Energy Sources}},

	year = {2015},
	month={DOI: 10.1109/TNET.2015.2404576, early access,}
}

@article{Han2012ice,
author = {Han, Tao and Ansari, Nirwan},
journal = {IEEE Communications Lett.},
number = {6},
pages = {866--869},
title = {{ICE: Intelligent cell brEathing to optimize the utilization of green energy}},
volume = {16},
year = {2012}
}


@article{Fooladivanda2013joint,

author = {Fooladivanda, Dariush and Rosenberg, Catherine},

journal = {IEEE Transactions on Wireless Communications},

number = {1},
pages = {248--257},
title = {{Joint resource allocation and user association for heterogeneous wireless cellular networks}},
volume = {12},
year = {2013}
}
@article{Chandrasekhar2009spectrum,

author = {Chandrasekhar, Vikram and Andrews, Jeffrey G.},
journal = {IEEE Transactions on Communications},

number = {10},
pages = {3059--3068},
title = {{Spectrum allocation in tiered cellular networks}},
volume = {57},
year = {2009}
}

@article{Kim2010performance,

author = {Kim, Youngju and Lee, Sungeun and Hong, Daesik},

journal = {IEEE Transactions on Wireless Communications},

number = {9},
pages = {2695--2700},
title = {{Performance analysis of two-tier femtocell networks with outage constraints}},
volume = {9},
year = {2010}
}

@article{Singh2014joint,
author = {Singh, Sarabjot and Andrews, Jeffrey G.},
journal = {IEEE Transactions on Wireless Communications},

title = {{Joint resource partitioning and offloading in heterogeneous cellular networks}},
number = {2},
pages = {888--901},
volume = {13},
year = {2014}
}

@article{Zhou2015joint,
author = {Zhou, Tian-Qing and Huang, Yong-Ming and Yang, Lu-Xi},
journal = {Wireless Personal Communications},

title = {{Joint User Association and Resource Partitioning with QoS Support for Heterogeneous Cellular Networks}},

volume = {83},
number = {1},
pages = {383--397},
year = {2015}
}


@inproceedings{Sun2015green,
author = {Sun, Xiang and Ansari, Nirwan and Fan, Qiang},
title = {{Green energy aware avatar migration strategy in green cloudlet networks}},
booktitle = { Proceedings - IEEE 7th International Conference on Cloud Computing Technology and Science, (CloudCom' 2015)},
address ={Vancouver, Canada},
year ={2015},
month ={Nov.}
}

@InProceedings{sun2016primal,
  author    = {Sun, Xiang and Ansari, Nirwan},
  title     = {{PRIMAL: PRofIt Maximization Avatar pLacement for Mobile Edge Computing}},
  booktitle = {Proc. of IEEE International Conference on Communications (ICC'2016)},
  year      = {2016},
  address   = {Kuala Lumpur, Malaysia},
  month     = {May},
}

@inproceedings{Han2013heuristic,
author = {Han, Tao and Ansari, Nirwan},
title = {{Heuristic relay assignments for green relay assisted device to device communications}},
booktitle = { Proc. of IEEE
Global Telecommunications Conf. (GLOBECOM'13)},
address ={Atlanta, GA},
year ={2013},
month ={Dec.}
}

@inproceedings{Baum2005interim,
title={An Interim Channel Model for Beyond-3G System},
author={Baum, Daniel S and Salo, J and Del Galdo, Giovanni and Milojevic, Marko and Ky{\"o}sti, P and Hansen, J},
journal={IEEE 61st Vehicular Technology Conference, VTC'2005},
address={Dallas, TX},
year={2005},
month={May}
}

@inproceedings{Goiri2012green,
author = {Goiri, {\'I}{\~n}igo and Le, Kien and Nguyen, Thu D and Guitart, Jordi and Torres, Jordi and Bianchini, Ricardo},
title = {{GreenHadoop: leveraging green energy in data-processing frameworks}},
booktitle = {Proceedings of the 7th ACM european conference on Computer Systems},
address ={Bern, Switzerland},
year ={2012},
month ={April}
}

@article{Islam2011analysis,
author = {Islam, Md Kafiul and Ahammad, Tanvir and Pathan, Enamul Haq and Haque, AM and Khandokar, Md Rezwanul Haque},
title = {{Analysis of Maximum Possible Utilization of Solar Radiation on a Solar Photovoltaic Cell with a Proposed Model}},

journal = {International Journal of Modeling and Optimization},
volume = {1},
number = {1},
pages = {66--69},
year = {2011},
month = {April}
} 

@article{Dhillon2013load,
	
author = {Dhillon, Harpreet S. and Ganti, Radha Krishna and Andrews, Jeffrey G.},


title = {{Load-Aware Modeling and Analysis of Heterogeneous Cellular Networks}},
journal = {IEEE Transactions on Wireless Communications},
volume = {12},
number = {4},
pages = {1666--1677},
year = {2013},
month={April}
}

@inproceedings{Han2013green,
	author = {Han, Tao and Ansari, Nirwan},
	title = {{Green-energy Aware and Latency Aware user associations in heterogeneous cellular networks}},
	booktitle = { Proceedings of IEEE
	Global Telecommunications Conference (GLOBECOM'13)},
	address ={Atlanta, GA},
	year ={2013},
	month ={Dec.}
}

@article{Kim2012distributed,
	author = {Kim, Hongseok and de Veciana, Gustavo and Yang, Xiangying and Venkatachalam, Muthaiah},
	title = {{Distributed $\alpha$-Optimal User Association and Cell Load Balancing in Wireless Networks}},
	journal = {IEEE/ACM Trans. on Networking},
	volume = {20},
	number = {1},
	pages = {177--190},
	year = {2012}
}


@article{Jo2012heterogeneous,
	author = {Jo, Han Shin and Sang, Young Jin and Xia, Ping and Andrews, Jeffrey G.},
	title = {{Heterogeneous cellular networks with flexible cell association: A comprehensive downlink SINR analysis}},
		journal = {IEEE Transactions on Wireless Communications},
	volume = {11},
	number = {10},
	pages = {3484--3494},
	year = {2012}
}

@book{1976Kleinrockqueueing,
	author = "Leonard Kleinrock",
	title="Queueing Systems: Computer applications",
	publisher="Wiley-Interscience",
	year="1976",
}

@book{2009Liucooperative,
	author={K. J. Ray Liu and Ahmed K. Sadek and Weifeng Su and Andres Kwasinski},
	title={Cooperative Communications and Networking},
	publisher={Cambridge University Press},
	year={2009},
}


@article{Han2013on,
	author = {Han, Tao and Ansari, Nirwan},
	title = {{On Optimizing Green Energy Utilization for Cellular Networks with Hybrid Energy Supplies}},
	journal = {IEEE Transactions on Wireless Communications},
	volume = {12},
	number = {8},
	pages = {3872--3882},
	year = {2013},
	month={Aug.}
}

@article{Hasan2011green,
	title={Green cellular networks: A survey, some research issues and challenges},
	author={Hasan, Ziaul and Boostanimehr, Hamidreza and Bhargava, Vijay K},
	journal={IEEE Communications Surveys \& Tutorials},
	volume={13},
	number={4},
	pages={524--540},
	year={2011},	
}

@inproceedings{Corroy2012dynamic,
	title={Dynamic cell association for downlink sum rate maximization in multi-cell heterogeneous networks},
	
	author={Corroy, Steven and Falconetti, Laetitia and Mathar, Rudolf},
	
	booktitle={Proceedings of IEEE International Conference on Communications (ICC'12)},
	
	address={Ottawa, Canada},
	year={2012},
	month={Jun.}
}
@inproceedings{Zhou2010energy,

author = {Zhou, Juejia and Li, Mingju and Liu, Liu and She, Xiaoming and Chen, Lan},

title = {Energy Source Aware Target Cell Selection and Coverage Optimization for Power Saving in Cellular Networks},

booktitle = {2010 IEEE/ACM Int'l Conference on Green Computing and Communications\&Int'l Conference on Cyber, Physical and Social Computing},
address={Hanzhou, China},
	year={2010},
	month={Dec.}
}

@InProceedings{fan2016green,
  author    = {Fan, Qiang and Ansari, Nirwan},
  title     = {Green energy aware user association in heterogeneous networks},
  booktitle = {Proc. of IEEE Wireless Communications and Networking Conference (WCNC'2016)},
  year      = {2016},
  address   = {Doha, Qatar},
  month     = {Apr.},
}

@Article{jia2015optimal,
  author  = {Jia, Mike and Cao, Jiannong and Liang, Weifa},
  title   = {{Optimal Cloudlet Placement and User to Cloudlet Allocation in Wireless Metropolitan Area Networks}},
  journal = {IEEE Transactions on Cloud Computing},
  year    = {2015},
  volume  = {X},
  doi     = {10.1109/TCC.2015.2449834},
  url     = {`docs/epic03/wrapper.htm?arnumber=7134728},
}

@Article{xu2016efficient,
  author  = {Xu, Zichuan and Member, Student and Liang, Weifa and Member, Senior},
  title   = {{Efficient Algorithms for Capacitated Cloudlet Placements}},
  journal = {IEEE Transactions on Parallel and Distributed Systems},
  year    = {2016},
  volume  = {27},
  number  = {10},
  pages   = {2866--2880},
  month   = {Oct.},
}

@InProceedings{lo2016mobile,
  author    = {Tawalbeh, Lo'ai A. and Bakheder, Waseem and Song, Houbing},
  title     = {A Mobile Cloud Computing Model Using the Cloudlet Scheme for Big Data Applications},
  booktitle = {2016 IEEE First International Conference on Connected Health: Applications, Systems and Engineering Technologies (CHASE)},
  year      = {2016},
  pages     = {73-77},
  address   = {Washington, DC},
}

@Article{satyanarayanan2015edge,
  author  = {Satyanarayanan, Mahadev and Simoens, Pieter and Xiao, Yu and Pillai, Padmanabhan and Chen, Zhuo and Ha, Kiryong and Hu, Wenlu and Amos, Brandon},
  title   = {Edge analytics in the internet of things},
  journal = {IEEE Pervasive Computing},
  year    = {2015},
  volume  = {14},
  number  = {2},
  pages   = {24--31},
}

@Article{satyanarayanan2009case,
  author  = {Satyanarayanan, Mahadev and Bahl, Paramvir and Caceres, Ram{\'o}n and Davies, Nigel},
  title   = {The case for vm-based cloudlets in mobile computing},
  journal = {IEEE pervasive Computing},
  year    = {2009},
  volume  = {8},
  number  = {4},
  pages   = {14--23},
}

@InProceedings{van2014opennetmon,
  author    = {Van Adrichem, Niels LM and Doerr, Christian and Kuipers, Fernando A},
  title     = {Opennetmon: Network monitoring in openflow software-defined networks},
  booktitle = {2014 IEEE Network Ops. and Mgmt. Sym. (NOMS)},
  year      = {2014},
  pages     = {1-8},
  address   = {Krakow, Poland},
  month     = {May},
}

@InProceedings{yu2015software,
  author       = {Yu, Curtis and Lumezanu, Cristian and Sharma, Abhishek and Xu, Qiang and Jiang, Guofei and Madhyastha, Harsha V},
  title        = {Software-defined latency monitoring in data center networks},
  booktitle    = {International Conference on Passive and Active Network Measurement},
  year         = {2015},
  volume       = {8995},
  pages        = {360--372},
  month        = {Mar.},
}

@Book{mirchandani1990discrete,
  title  = {Discrete location theory},
  year   = {1990},
  author = {Mirchandani, Pitu B and Francis, Richard L},
}

@Comment{jabref-meta: databaseType:bibtex;}


@article{alzoubi_distributed_2002,
	title = {Distributed heuristics for connected dominating sets in wireless ad hoc networks},
	volume = {4},
	issn = {1229-2370},
	url = {http://ieeexplore.ieee.org/document/6596929/},
	doi = {10.1109/JCN.2002.6596929},
	number = {1},
	urldate = {2020-07-02},
	journal = {Journal of Communications and Networks},
	author = {Alzoubi, Khaled M. and Wan, Peng-Jun and Frieder, Ophir},
	month = mar,
	year = {2002},
	pages = {22--29}
}

@Inbook{Butenko2004,
author="Butenko, Sergiy
and Cheng, Xiuzhen
and Oliveira, Carlos A.
and Pardalos, P. M.",
editor="Butenko, Sergiy
and Murphey, Robert
and Pardalos, Panos M.",
title="A New Heuristic for the Minimum Connected Dominating Set Problem on Ad Hoc Wireless Networks",
bookTitle="Recent Developments in Cooperative Control and Optimization",
year="2004",
publisher="Springer US",
address="Boston, MA",
pages="61--73",
abstract="Given a graph G = (V, E), a dominating set D is a subset of V such that any vertex not in D is adjacent to at least one vertex in D. Efficient algorithms for computing the minimum connected dominating set (MCDS) are essential for solving many practical problems, such as finding a minimum size backbone in ad hoc networks. Wireless ad hoc networks appear in a wide variety of applications, including mobile commerce, search and discovery, and military battlefield. In this chapter we propose a new efficient heuristic algorithm for the minimum connected dominating set problem. The algorithm starts with a feasible solution containing all vertices of the graph. Then it reduces the size of the CDS by excluding some vertices using a greedy criterion. We also discuss a distributed version of this algorithm. The results of numerical testing show that, despite its simplicity, the proposed algorithm is competitive with other existing approaches.",
isbn="978-1-4613-0219-3",
doi="10.1007/978-1-4613-0219-3_4",
url="https://doi.org/10.1007/978-1-4613-0219-3_4"
}

@inproceedings{shah2017ctr,
  title={Ctr: Cluster based topological routing for disaster response networks},
  author={Shah, Vijay K and Roy, Satyaki and Silvestri, Simone and Das, Sajal K},
  booktitle={IEEE International Conference on Communications (ICC)},
  pages={1--6},
  year={2017},
  organization={IEEE}
}

@inproceedings{shah2019x,
  title={X-CHANT: A Diverse DSA based Architecture for Next-generation Challenged Networks},
  author={Shah, Vijay K and Silvestri, Simone and Luciano, Brian and Das, Sajal K},
  booktitle={IEEE INFOCOM 2019-IEEE Conference on Computer Communications},
  pages={586--594},
  year={2019},
  organization={IEEE}
}
	
\end{document}